\documentclass[11pt]{article}
\usepackage{amsmath,amsthm,amsfonts}
\usepackage{enumerate}
\usepackage{multirow}
\usepackage{bigstrut}
\usepackage{xspace}


\oddsidemargin=-0.1in \evensidemargin=-0.1in \topmargin=-.5in
\textheight=9in \textwidth=6.5in


\newtheorem{theorem}{Theorem}[section]
\newtheorem{lemma}[theorem]{Lemma}
\newtheorem{corollary}[theorem]{Corollary}
\newtheorem{claim}[theorem]{Claim}
\newtheorem{fact}[theorem]{Fact}
\newtheorem*{infthm}{Informal Statement}

\theoremstyle{definition}
\newtheorem{definition}[theorem]{Definition}
\newtheorem*{remark}{Remark}
\newtheorem{thm}{Theorem}[section]
\newtheorem{prop}[thm]{Proposition}
\newtheorem{defn}[thm]{Definition}

\theoremstyle{plain}

\newcommand{\ch}{c'}
\newcommand{\nh}{m}

\newcommand{\bit}{\{0,1\}}
\newcommand{\bits}[1]{\bit^{#1}}
\newcommand{\pmbit}{\{-1,1\}}
\newcommand{\pmbits}[1]{\pmbit^{#1}}
\newcommand{\iplus}{^{i+}}
\newcommand{\iminus}{^{i-}}

\newcommand{\cP}{\mathcal{P}}
\newcommand{\calP}{\mathcal{P}}

\newcommand{\CC}[2]{\mathrm{C}_{#1}^{#2}}
\newcommand{\CCmono}{\CC{\psi}{\mono}}
\newcommand{\CCfourier}{\CC{\psi}{\fourier_k}}

\newcommand{\deq}{:=}

\newcommand{\disj}{\textsc{disj}\xspace}
\newcommand{\kdisj}{\textsc{$k$-disj}\xspace}
\newcommand{\lkdisj}{\textsc{($\ell k$)-disj}\xspace}

\newcommand{\ordisj}{\textsc{or-disj}\xspace}
\newcommand{\orkdisj}{\textsc{or-$k$-disj}\xspace}
\newcommand{\orddisj}{\textsc{or-$d$-disj}\xspace}
\newcommand{\disjname}{\textsc{set-disjointness}\xspace}

\newcommand{\sparsedisj}{\textsc{sparse-set-disjointness}\xspace}

\newcommand{\E}{\mathbb{E}}
\newcommand{\Ex}{\E}
\newcommand{\eps}{\epsilon}
\newcommand{\epsh}{\ell}

\newcommand{\fourier}{\textsc{fourier}}
\newcommand{\mono}{\textsc{mono}\xspace}
\newcommand{\monotone}{\textsc{mono}}

\newcommand{\N}{\mathbb{N}}

\newcommand{\ZZ}{\mathbb{Z}}

\newcommand{\hastad}{H{\aa}stad\xspace}

\def\l{\ell}
\def\cC{\mathcal{C}}
\def\cP{\mathcal{P}}

\title{Distance-Sensitive Property Testing Lower Bounds}
\author{Joshua Brody \\ Swarthmore College \\ joshua.e.brody@gmail.com
  \and Pooya Hatami \\ University of Chicago \\ pooya@cs.uchicago.edu
}

\begin{document}
\maketitle

\begin{abstract}
In this paper, we consider several property testing problems and ask
how the query complexity depends on the distance parameter $\eps$.  We
achieve new lower bounds in this setting for the problems of testing
whether a function is monotone and testing whether the function has
low Fourier degree.  For monotonicity testing, our lower bound matches
the recent upper bound of Chakrabarty and Seshadhri~\cite{CS:13}.
\end{abstract}

\newpage
\section{Introduction}
Property Testing is a subfield which seeks to understand what can be
learned about a large object given limited access to the object itself.
In a typical setup, a property tester is a randomized algorithm that,
given a large object as input, must (i) accept with probability $2/3$
if the object has a certain property, and (ii) reject with probability
$2/3$ if the object is \emph{far} from having said property.  We
charge the tester for the number of queries it makes to the object,
and hope that it runs in time significantly sublinear in the size of
the object.  The \emph{query complexity} of a property $P$, denoted
$Q(P)$, is the minimum number of queries needed to test an object for
$P$.

Property testing has been considered for many different properties on
many classes of objects, including testing properties of graphs,
probability distributions, and functions.  See surveys by
Goldreich~\cite{Goldreich10,Goldreich11} and Ron~\cite{Ron:09} for
comprehensive development.
We focus on testing properties of functions.  A property of functions
$f: D \rightarrow R$ is a subset $P \subseteq \{f : D \rightarrow
R\}$.  A query to $f$ is $f(x)$ for some $x \in D$.  We say that $f$
is $\eps$-far from $P$ if $\Pr[f(x) \neq g(x)] \geq \eps$ for all $g
\in P$.

Since the seminal work of Rubinfeld and Sudan~\cite{RS96}, several
properties have been considered, including testing
linearity~\cite{BLR93}, junta
testing~\cite{FKR+:04,Bla:08,Bla:09}, testing whether a function is
isomorphic to a given function~\cite{BO:10,AB:10,CGM11}, and testing
whether a function can be computed by various weak models of
computation, including size-$s$ decision trees~\cite{DLM+:07} and
small-width OBDDs~\cite{Gol:10,RT09,RT10}. 

A variety of techniques have been developed for designing property
testing algorithms thus proving testing upper bounds. However, as is
often the case in theoretical computer science, lower bounds are
harder to come by. Although several lower bounds are known for
different property testing problems, very few general techniques are
known beyond the use of Yao's minimax lemma. Recently, Blais
et. al.~\cite{BBM:12} came up with a new technique to prove property
testing lower bounds, using known lower bounds for randomized
communication complexity problems. In particular, they show how to
reduce certain communication complexity problems to testing problems,
thus showing that communication lower bounds imply lower bounds for
property testing. They show that this technique is indeed powerful by
applying it on several testing problems and improving on some previous
known lower bounds for testing $k$-linearity, $k$-juntas, Fourier
degree$\leq k$, etc. It has not been obvious how to come up with lower
bounds with dependence on the distance parameter $\epsilon$ using this
technique.  In this work, we extend the technique of Blais et al.~and
prove testing lower bounds that depend on $\eps$.

\subsection{Our Results}
\paragraph{Monotinicity Testing:} Our first result is a new lower bound for monotonicity testers.  A
function $f:\pmbits{n} \rightarrow R$ is \emph{monotone} if $f(x) \leq
f(y)$ whenever $x \leq y$.\footnote{We say that $x \leq y$ if $x_i
  \leq y_i$ for all $1\leq i \leq n$.}  Let \mono be the property that
a function is monotone.
Monotonicity testing on various domains has been extensively studied
(\cite{BatuRW99,ErgunKKRV00,GGL+:00,DGL+:99,FLN+:02,BGJ+:09,BCGM10}).
However, for functions on the boolean hypercube, progress remained
elusive until very recently, and gaps between known upper and lower
bounds remain.  For large range sizes, recent progress has closed this
gap considerably: the query complexity lies between $O(n/\eps)$
(Chakrabarty and Seshadhri \cite{CS:13}) and $\Omega(\min\{n,
|R|^2\})$ (Blais et al.~\cite{BBM:12}).  In this work, we give a new
lower bound that completely closes this gap for large $|R|$.
\begin{infthm}[Theorem~\ref{thm:mono}]
  Testing \mono requires $\Omega\left(\min\{n/\eps,
  |R|^2/\eps\}\right)$ queries.
\end{infthm}

Note that for $|R| = \Omega(\sqrt{n})$, Theorem~\ref{thm:mono} is
tight, even for subconstant $\eps$.  Establishing lower bounds
sensitive to the distance parameter is not just a trivial pursuit.
Indeed, recent work suggests that for monotonicity testing of boolean
functions, understanding how the distance parameter affects the query
complexity is key to understanding the overall difficulty of
monotonicity testing.  In another very recent work, Chakrabarty and
Seshadhri~\cite{CS:13b} give a new tester for boolean functions.
Their tester is nonadaptive, has one-sided error, and makes
$\tilde{O}(n^{5/6}\eps^{-5/3})$ queries.  Their result is surprising
and somewhat counterintuitive, because their tester is a path tester,
and an earlier lower bound of~\cite{BCGM10} states that
$\Omega(n/\eps)$ queries are required.  This lower bound
crucially assumes a linear dependence on $\eps$ in the distance
parameter.  In this way, focusing on bounds sensitive to the distance
parameter $\eps$ appears key to obtaining tight bounds.

\paragraph{Testing Fourier Degree:} Our second result is for testing whether a boolean function
$f:\pmbits{n} \rightarrow \pmbit$ has low Fourier degree.  $f$ has
\emph{Fourier degree $k$} if all nonzero Fourier coefficients have
degree at most $k$.  Let $\fourier_k$ denote this property.

Upper bounds of $2^{O(k)}/\eps^2$ are known for testing whether a Boolean
function has Fourier degree $\leq k$ or is $\epsilon$-far from any
Boolean function with Fourier degree $k$~\cite{CGM11a, DLM+:07}. The
best lower bound known on this problem has been
$\Omega(\min\{k,n-k\})$~\cite{BBM:12,CGM11}, which holds for any $\epsilon\leq
1/2$. In this paper we show a lower bound of
$\Omega(\min\{k,n-k\}/\sqrt{\epsilon})$ for testing Fourier degree.
\begin{infthm}[Theorem~\ref{thm:fourier}]
  Testing $\fourier_k$ requires $\Omega(\frac{\min\{k,n-k\}}{\sqrt{\eps}})$
  queries.
\end{infthm}
To our knowledge, this is the first lower bound for $\fourier_k$
sensitive to $\eps$.  While the bound is far from tight (in terms of
both $k$ and $\eps$) even this analysis requires heavier
machinery than our lower bound for \mono testers.

Our final result is a distance-sensitive lower bound for
\emph{approximate Fourier degree} testing.  In this problem, the goal
is to distinguish functions of Fourier degree at most $k$ from those
that are far from any function with Fourier degree at most $n-k$.
\begin{infthm}[Theorem~\ref{thm:main1}]
  Any nonadaptive tester for approximate Fourier degree requires
  $\Omega(1/\eps)$ queries.
\end{infthm}
Chakrabarty et al.~\cite{CGM11a} gave a lower bound of $\Omega(k)$ for
the same problem.  Thus, the combined lower bound is $\Omega(k+1/\eps)$.

\subsection{Outline}
In Section~\ref{section:Method}, we formalize notation and describe
the tools we use.  Section~\ref{sec:mono} develops the lower bound for
monotonicity testers, and Section~\ref{sec:fourierdegree} gives our
lower bounds on Fourier degree.

\section{Preliminaries and Notation}\label{section:Method}
We use $[n]$ to denote the set $\{1,\ldots, n\}$.  The \emph{Hamming
  weight} of a string $x \in \pmbits{n}$, denoted $|x|$, is the number
of $i$ such that $x_i=1$.  We occasionally abuse notation and
associate a string $x \in \pmbits{n}$ with the corresponding set $\{i
: x_i = +1\}$.  When working with multiple strings $x_1,\ldots, x_\ell
\in \pmbits{n}$, we use double subscripts $x_{ij}$ to denote the $j$-th
bit of $x_i$.  This paper analyzes univariate functions
$f:\pmbits{n}\rightarrow \pmbit$, but in our proof of the monotonicity 
testing lower bound it will often be useful to think
of $f$ as a bivariate function.  Hence, we abuse notation somewhat and
use $f(t,z)$ to refer to a univariate function whose input is the
concatenation of $t$ and $z$.

\subsection{{\bf Communication Complexity and Property Testing Lower Bounds}}
In this section we will give a brief introduction to communication
complexity, and state known lower bounds for the famous set
disjointness problem.  The two-party communication model was
introduced by Andrew C. Yao~\cite{Yao79} in 1979.  In this paper,
we are chiefly concerned with (public-coin) randomized communication
complexity.  In a typical communication protocol, there are two
players, Alice and Bob.  Alice receives an input $x$; Bob receives an
input $y$, and they wish to communicate to jointly compute some
function $f(x,y)$ of their inputs.  In a randomized protocol, Alice
and Bob have access to a shared random string $R$ which they can use
to select what messages to send.  Furthermore, players are allowed a
small amount of error.  The \emph{communication cost} of a protocol is
the worst-case maximum number of bits sent, taken over all possible
inputs and values of $R$.  The \emph{communication complexity} of a
function $f$ is the minimal communication cost of a protocol computing $f$.

\begin{definition}
  The \emph{bounded-error randomized communciation complexity} of $f$,
  denoted by $R_\eps(f)$ is the minimum cost of a randomized
  communication protocol that on all inputs $(x,y)$ computes $f$ with
  success probability $1-\eps$.  We set $R(f) \deq R_{1/3}(f)$.
\end{definition}

Of particular interest to us is the \disjname problem.

\begin{definition}[Set Disjointness]
  Alice and Bob are given $x$ and $y$, $x,y\in \pmbits{n}$, and
  need jointly to compute $$\disj_n(x,y)= \vee_{i=1}^{n} (x_i \wedge y_i),$$
  where $(a \wedge b)=1$ if $a=b=1$, and $-1$ otherwise.
\end{definition}
We drop the subscript when it is clear from context.
\disjname is perhaps the primary communication problem used to prove
lower bounds in other areas of computer science.  Obtaining a tight
lower bound on its communication complexity was an important problem
first solved by Kalyanasundaram and
Schnitger~\cite{KalyanasundaramS92}, then simplified by
Razborov~\cite{Razborov90} and later by Bar-Yossef et
al.~\cite{BarYossefJKS02} using information complexity.

\begin{theorem}[\cite{KalyanasundaramS92,Razborov90,BarYossefJKS02}]
  $R(\disj_n) = \Omega(n)$.
\end{theorem}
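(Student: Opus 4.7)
The plan is to prove $R(\disj_n) = \Omega(n)$ via the information-complexity framework of Bar-Yossef et al., since this gives the cleanest direct-sum style argument. First I would invoke Yao's minimax principle to reduce the problem to a distributional one: it suffices to exhibit an input distribution $\mu$ on $\pmbits{n}\times\pmbits{n}$ such that every deterministic protocol computing $\disj_n$ correctly on $\mu$ with probability $\geq 2/3$ has communication cost $\Omega(n)$. The natural choice of $\mu$ is a product distribution where the $n$ coordinate pairs $(x_i,y_i)$ are i.i.d.~from a single-coordinate distribution $\nu$ supported on $\{(-1,-1),(-1,1),(1,-1)\}$ (so the disjointness answer is almost surely $-1$). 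The point of avoiding $(1,1)$ in the marginal is to force a protocol to behave informatively; one introduces an auxiliary random variable $D_i\in\{A,B\}$ so that, conditioned on $D_i=A$, Alice's bit $x_i$ is fixed to $-1$ and $y_i$ is uniform, and symmetrically for $D_i=B$. The vector $D=(D_1,\ldots,D_n)$ plays the role of a conditioning variable in the information cost.

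Next I would define the \emph{conditional information cost} of a protocol $\Pi$ as $\mathrm{IC}_\mu(\Pi\mid D)=I(X,Y;\Pi(X,Y)\mid D)$, and establish two standard facts. First, information cost lower bounds communication: $|\Pi|\geq \mathrm{IC}_\mu(\Pi\mid D)$. Second, by the choice of $\mu$ as a product over coordinates and the careful construction of $D$, one obtains a \emph{direct sum}: if $\Pi$ solves $\disj_n$ with bounded error, then
\[
  \mathrm{IC}_\mu(\Pi\mid D) \;\geq\; \sum_{i=1}^n \mathrm{IC}_{\nu}(\Pi_i\mid D_i),
\]
where $\Pi_i$ is, in effect, the single-coordinate protocol one obtains by having the players embed a single $\text{AND}$ instance into the $i$-th coordinate while sampling the other $n-1$ coordinates using shared randomness conditioned on $D_{-i}$ (which they can do because, under $\mu$ conditioned on $D$, each coordinate requires only one player's private knowledge of their own bit). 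The direct-sum step then reduces the task to proving a \emph{constant} lower bound on the conditional information cost of any bounded-error protocol for the two-bit $\text{AND}$ function under $\nu$.

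The single-coordinate information-cost lower bound for $\text{AND}$ is the core of the argument and the step I expect to be the main obstacle. The plan here is to use the Hellinger-distance machinery: one shows that if $\Pi$ distinguishes $(-1,1)$ from $(1,-1)$ from $(1,1)$ with good probability, then the transcript distributions on these three inputs must be pairwise far in Hellinger distance. The delicate point is the \emph{cut-and-paste lemma}, which says that for any deterministic two-party protocol the Hellinger distance between transcript distributions satisfies $h(\Pi_{xy},\Pi_{x'y'})=h(\Pi_{xy'},\Pi_{x'y})$; combined with the \emph{rectangle property} of deterministic protocols, this forces $\Pi_{(1,1)}$ to be close to both $\Pi_{(-1,1)}$ and $\Pi_{(1,-1)}$, contradicting correctness unless the transcript carries $\Omega(1)$ bits of information about $(X_i,Y_i)$ conditioned on $D_i$. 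Standard inequalities between Hellinger distance and mutual information then upgrade this to $\mathrm{IC}_{\nu}(\Pi_i\mid D_i)=\Omega(1)$.

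Combining the three ingredients yields $|\Pi|\geq\mathrm{IC}_\mu(\Pi\mid D)\geq n\cdot\Omega(1)=\Omega(n)$, which establishes the theorem. The conceptually easy parts are Yao's principle and the direct-sum decomposition (which is essentially a chain-rule computation exploiting the product structure of $\mu$ and the independence of the $D_i$). The subtle parts are (i) checking that the auxiliary variables $D_i$ are set up so that each player, with their private input and shared randomness, can simulate the other coordinates without extra communication, and (ii) the Hellinger-plus-cut-and-paste argument for the $\text{AND}$ lower bound, whose success depends on choosing $\nu$ so that the three relevant input pairs are in the support and the correct-output values differ across them.
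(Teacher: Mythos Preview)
Your proposal is a correct and faithful outline of the Bar-Yossef--Jayram--Kumar--Sivakumar information-complexity proof of the $\Omega(n)$ lower bound for set disjointness. However, the paper does not supply its own proof of this statement: it is recorded as a known result and attributed to~\cite{KalyanasundaramS92,Razborov90,BarYossefJKS02} without argument. So there is nothing in the paper to compare your approach against; you have simply written out (accurately) the proof from one of the three cited references. If anything, your sketch goes well beyond what the paper requires, since the theorem is invoked as a black box.
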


We will also be interested in the \sparsedisj problem, denoted
$\kdisj$.  In this version, we are promised that the inputs each have
Hamming weight $k$ ($|x| = |y| = k$) and that $|x\cap y| \leq 1$;
i.e., the intersection, if it exists, is unique.  This
``unique-intersection'' promise is implicit in the lower bounds of
Kalyanasundaram and Schnitger and of Razborov.  The following lower
bound from~\cite{BBM:12} is a straightforward reduction from $\disj$.
\begin{lemma}[\cite{BBM:12}]\label{lem:kdisj-lb}
$R(\kdisj_n)=\Omega(\min\{2k,n-2k\})$.
\end{lemma}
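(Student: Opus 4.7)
My plan is to establish Lemma~\ref{lem:kdisj-lb} via a zero-communication reduction from $\disj_m$ to $\kdisj_n$, with the parameter choice $m := \min\{4k, n-2k\}$. The external fact I need is that the $\Omega(m)$ lower bound for $R(\disj_m)$ already holds on the standard hard distribution supported on instances $(a, b)$ with $|a|, |b| \leq m/4$ and $|a \cap b| \leq 1$; this is implicit in the Razborov and BJKS proofs, and I would add a short pointer to the relevant step. Because $m/4 \leq k$, the Hamming-weight bound matches what the target problem $\kdisj_n$ can absorb.

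For the reduction I would partition $[n]$ into four disjoint blocks: a shared block $P = [1, m]$, Alice's private padding block $A = [m+1, m+k]$, Bob's private padding block $B = [m+k+1, m+2k]$, and an unused residue block $Z = [m+2k+1, n]$. The containment $m + 2k \leq n$ is exactly the inequality $m \leq n - 2k$, which holds by construction. Given a $\disj_m$ instance $(a, b)$ from the distribution above, Alice and Bob build $x, y \in \pmbits{n}$ noninteractively: both embed $a$ and $b$ respectively into $P$; Alice places $+1$ in the first $k - |a|$ coordinates of $A$ and $-1$ in all coordinates outside $P \cup A$; Bob pads $y$ symmetrically, placing $+1$ in the first $k - |b|$ coordinates of $B$ and $-1$ in all coordinates outside $P \cup B$. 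A direct check shows $|x| = |y| = k$, and since each padding block is touched by only one player and is disjoint from $P$, we get $x \cap y = a \cap b$, so $|x \cap y| \leq 1$ and $\kdisj_n(x, y) = \disj_m(a, b)$.

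Any protocol for $\kdisj_n$ thus solves $\disj_m$ on the hard distribution with zero communication overhead, yielding $R(\kdisj_n) \geq R(\disj_m) = \Omega(m) = \Omega(\min\{2k, n-2k\})$; the final equality is a small case analysis verifying $\min\{4k, n-2k\} \geq \min\{2k, n-2k\}$ in both regimes $n \leq 4k$ and $n > 4k$. The only delicate step is the first one, extracting the $\disj_m$ lower bound against inputs of Hamming weight at most $m/4$; everything else is routine bookkeeping about the partition and the weights.
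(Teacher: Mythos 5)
Your proof is correct, and it is an instantiation of exactly the ``straightforward reduction from $\disj$'' that the paper attributes to~\cite{BBM:12} without spelling it out: pad a Razborov-type hard $\disj_m$ instance (whose sets have size $\Theta(m)$ and intersect in at most one coordinate) into disjoint private blocks so that both players' sets reach weight exactly $k$, take $m$ as large as the universe $[n]$ and the weight budget $k$ permit, and inherit the $\Omega(m)$ bound. Your parameter choice $m = \min\{4k, n-2k\}$ even yields the slightly stronger $\Omega(\min\{4k, n-2k\})$, which trivially dominates the stated $\Omega(\min\{2k, n-2k\})$, and the bookkeeping ($m/4 \le k$, $m + 2k \le n$, $x\cap y = a\cap b$) all checks out.
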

This tightens a folklore $\Omega(k)$ lower bound, which holds as long
as $k \leq n/3$.  Note that this is essentially tight, as shown in a
clever protocol by \hastad and Wigderson~\cite{HastadW07}.

\begin{theorem}[\cite{HastadW07}]
  $R(\kdisj_n) = O(k)$.
\end{theorem}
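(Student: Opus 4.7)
The goal is a randomized protocol for $\kdisj_n$ with cost $O(k)$ and constant success probability. My first attempt would be a two-phase approach: sparsify via public-coin hashing, then run a subprotocol on the sparsified instance. In phase one, Alice and Bob use a public random hash $h:[n]\to[m]$ and pass to the instance $(h(x),h(y))$; this uses no communication. The unique-intersection promise is preserved, while spurious collisions between $x\setminus y$ and $y\setminus x$ are introduced at rate $\Theta(k^2/m)$, so one must choose $m$ carefully.

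A naive phase two already illustrates the central difficulty. Taking $m=\Theta(k^2)$ avoids collisions with constant probability, but even the best encoding of a size-$k$ subset of $[k^2]$ costs $\Theta(k\log k)$ bits, not $O(k)$. On the other extreme, $m=\Theta(k)$ makes the trivial characteristic-vector encoding $O(k)$ bits, but the collision probability becomes $\Theta(1)$ per pair, overwhelming the signal from any true intersection.

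My attack on phase two would be an iterated halving protocol on the sparsified universe. Using public randomness, partition the universe into two equal halves; each player can independently compute which of their elements fall on which side, so no communication is needed for the partition itself. Recurse into the two subinstances. By Chernoff, each side holds roughly $k/2$ of each player's elements, so the recurrence is $T(n,k)\le 2T(n/2,k/2)+O(1)$, with a base case of $O(1)$ communication when $k$ and $n$ are small. This unwinds to $T(n,k)=O(k\log(n/k))$.

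The main obstacle is therefore shaving off the last $\log(n/k)$ factor, which is exactly the contribution of \cite{HastadW07}. To do this I would need their more refined scheme of $O(\log^* n)$ rounds of mutual compression, exploiting the unique-intersection promise at every scale so that the per-element communication amortizes to $O(1)$ across all rounds combined. I expect the delicate part of the analysis to be a concentration argument showing that pruning is sufficiently aggressive at every scale, together with an information-theoretic bookkeeping (in the spirit of \cite{BarYossefJKS02}) certifying that no bits are wasted—and a careful handling of the tail rounds where $k$ has shrunk to $O(1)$ but the active universe is still polynomially large.
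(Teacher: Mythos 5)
The paper does not prove this theorem; it cites it directly from H{\aa}stad and Wigderson~\cite{HastadW07}, and the bound is never used in any argument of the paper --- it is stated only as context alongside the lower bound of Lemma~\ref{lem:kdisj-lb}. So there is no internal proof to compare against; the question is whether your proposal amounts to one.

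It does not, and you say so yourself. After correctly computing the cost of the one-shot hash-to-$[k^2]$-and-send protocol ($O(k\log k)$) and of recursive halving ($O(k\log(n/k))$), you explicitly defer the crux --- getting from $k$ times a logarithm down to $O(k)$ --- to the ``refined scheme'' of \cite{HastadW07}, describing it only as ``$O(\log^* n)$ rounds of mutual compression'' with unspecified ``information-theoretic bookkeeping.'' That deferred step \emph{is} the theorem, so the proposal has a genuine gap rather than being a proof. As a further caution, recursive halving is probably not even the right scaffold: its recursion tree has $\Theta(k)$ leaves, and keeping the total false-positive probability constant across $k$ equality tests already pushes you to $\Theta(\log k)$-bit fingerprints and hence $\Theta(k\log k)$ overall, so shaving the logarithm is not a small refinement of this scheme. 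The H{\aa}stad--Wigderson protocol instead proceeds by \emph{alternating filters}: the player with the currently smaller candidate set hashes it into a range only a constant factor larger than the set and ships the bitmap of the image; the other player keeps only elements landing in that bitmap, shrinking by a large constant factor; the roles swap. Since each round's cost is proportional to the geometrically shrinking current set size, the total telescopes to $O(k)$, with round-dependent error amplification to keep the overall failure probability constant. That alternating-filter mechanism and the accompanying base-case handling, not a Chernoff-balanced halving with information-theoretic accounting layered on top, is what is missing from your plan.
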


It will be convenient for our applications to consider the following
direct-sum variants of $\disj$.
\begin{definition}
  The functions $\ordisj_m^\ell, \orkdisj_m^{\ell} : \pmbits{\ell m}
  \times \pmbits{\ell m} \rightarrow \pmbit$ are defined
  as $$\ordisj_m^{\ell}(x_1,\ldots, x_\ell,y_1,\ldots, y_\ell) \deq
  \bigvee_{i=1}^{\ell} \disj_m(x_i,y_i)\ .$$ $$\orkdisj_m^{\ell}(x_1,\ldots,
  x_\ell,y_1,\ldots, y_\ell) = \bigvee_{i=1}^{\ell}\kdisj_m(x_i,y_i)\ .$$
\end{definition}

\begin{lemma}\label{lem:ordisj}
  The following direct-sum properties hold:
  \begin{enumerate}
  \item
    $R(\ordisj_m^{\ell}) = \Omega(\ell m)$.  
  \item
    $R(\orkdisj_m^{\ell}) = \Omega(\min\{\ell k, \ell(m-2k)\})$.
  \end{enumerate}
\end{lemma}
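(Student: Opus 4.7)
\textbf{Part (1)} is essentially a bookkeeping reduction. The plan is to observe that OR is associative across the blocks, so
\[
\ordisj_m^\ell(x_1,\ldots,x_\ell,y_1,\ldots,y_\ell) = \bigvee_{i=1}^\ell \bigvee_{j=1}^m (x_{ij}\wedge y_{ij}) = \disj_{\ell m}(X, Y),
\]
where $X$ and $Y$ are the concatenations $x_1\cdots x_\ell$ and $y_1\cdots y_\ell$. Any randomized protocol for $\ordisj_m^\ell$ is therefore a protocol for $\disj_{\ell m}$ of the same cost, and the $\Omega(\ell m)$ lower bound follows immediately from the Kalyanasundaram-Schnitger bound $R(\disj_n) = \Omega(n)$.

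\textbf{Part (2)} requires a genuine direct-sum argument and cannot be obtained from a single reduction of the kind used in part (1): a generic instance of $\disj_{\ell k}$ (or of $\kdisj_{\ell m}$) cannot be parsed as a legal $\orkdisj_m^\ell$ instance because the per-block weights are not guaranteed to equal $k$. My plan is to follow the information complexity framework of Bar-Yossef et al. First I would fix a distribution $\mu$ on $\pmbits{m} \times \pmbits{m}$ supported on legal $\kdisj_m$ instances (both sides weight $k$, intersection at most one) under which any constant-error protocol for $\kdisj_m$ has \emph{internal} information cost $\Omega(\min\{k, m-2k\})$. This is essentially the quantitative content behind Lemma~\ref{lem:kdisj-lb}: it is obtainable by combining the Bar-Yossef et al. argument, which gives the small-$k$ bound via an AND-primitive, with the refinement from~\cite{BBM:12} that addresses the near-balanced regime. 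Crucially, $\mu$ should place constant probability mass on NO (disjoint) instances, which is automatic for the natural hard distributions.

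With $\mu$ in hand, the second step is to consider the product distribution $\mu^{\otimes \ell}$ for $\orkdisj_m^\ell$ and argue, in the standard OR-direct-sum style, that any protocol $\Pi$ for $\orkdisj_m^\ell$ implicitly solves $\kdisj_m$ on each coordinate: conditioning on the remaining $\ell - 1$ copies being NO is a constant-probability event, and under that conditioning $\orkdisj_m^\ell$ equals precisely $\kdisj_m$ on the distinguished copy. This ``collapsing'' embedding is the condition needed to apply the standard direct-sum theorem for internal information complexity, yielding an information cost of $\Omega(\ell \cdot \min\{k, m-2k\}) = \Omega(\min\{\ell k, \ell(m-2k)\})$; since communication dominates internal information, the bound on $R(\orkdisj_m^\ell)$ follows. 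The main obstacle is the first step---producing a single distribution $\mu$ that delivers the internal information bound across both the small-$k$ and the near-balanced regimes---after which the OR-direct-sum step is entirely analogous to the way $\Omega(n)$ for $\disj_n$ arises as an OR-sum of $n$ independent AND-primitives.
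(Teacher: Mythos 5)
Your Part (1) is correct and is essentially the paper's argument: after concatenation, $\ordisj_m^\ell$ \emph{is} $\disj_{\ell m}$ (the map is a bijection since there is no weight promise), so the $\Omega(\ell m)$ bound is immediate.

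For Part (2), your concern is well-founded: a $\lkdisj_{\ell m}$ instance of weight $\ell k$ need not split into $\ell$ blocks each of weight exactly $k$, so identifying $\orkdisj_m^\ell$ with the block-balanced restriction of $\lkdisj_{\ell m}$ only yields $R(\orkdisj_m^\ell) \le R(\lkdisj_{\ell m})$, which is the wrong direction. The paper's proof in fact asserts the reverse implication (``any protocol for $\orkdisj_m^\ell$ is also a protocol for $\lkdisj_{\ell m}$''), which is not true as stated, so you have spotted a real gap in the write-up. That said, the information-complexity direct-sum machinery you propose is far heavier than the situation requires, and your outline has loose ends of its own: (i) Lemma~\ref{lem:kdisj-lb} is a communication bound proved by reduction, so its quantitative content does not automatically translate into an $\Omega(\min\{k,m-2k\})$ \emph{internal information} bound for $\kdisj_m$ (especially in the near-balanced regime $m-2k$ small), and you would need to establish that separately; and (ii) the standard OR direct-sum uses a distribution $\mu$ that is supported \emph{entirely} on NO instances of the primitive, so that the other $\ell-1$ coordinates never force the OR to $1$---``conditioning on the remaining $\ell-1$ copies being NO'' is not a constant-probability event for large $\ell$ if $\mu$ only places constant mass on NO.

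A much lighter patch that preserves the paper's intended shape of argument: reduce $\disj_N$ (under the usual unique-intersection promise, for which $R=\Omega(N)$ still holds), with $N=\ell\cdot\min\{2k,m-2k\}$, \emph{directly} to $\orkdisj_m^\ell$ by partitioning $[N]$ into $\ell$ equal blocks and applying the per-block reduction underlying Lemma~\ref{lem:kdisj-lb} within each block. Each block is then a legal weight-$k$ $\kdisj_m$ instance, the unique intersection (if any) lands in exactly one block, and the OR of the block answers equals $\disj_N(u,v)$; hence $R(\orkdisj_m^\ell)=\Omega(N)=\Omega(\min\{\ell k,\ell(m-2k)\})$. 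Equivalently: the hard distribution witnessing the $\lkdisj_{\ell m}$ lower bound can be chosen to be block-balanced, after which the paper's one-line reduction is sound.
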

\begin{proof}
  We include the proof for $\orkdisj_m^{\ell}$; the proof for
  $\ordisj_m^{\ell}$ is similar.  Let $x_1,\ldots, x_\ell$ and
  $y_1,\ldots, y_\ell$ be inputs to $\orkdisj_m^{\ell}$, and let $x$
  and $y$ be the concatenation of $x_1,\ldots, x_\ell$ and
  $y_1,\ldots, y_\ell$ respectively.  Note that $x$ and $y$ each have
  Hamming weight $\ell k$, and furthermore $x$ and $y$ intersect iff
  $x_i$ and $y_i$ intersect for some $i$.  Thus, we have
  \begin{align*}
    \orkdisj_m^{\ell}(x_1,\ldots, x_\ell,y_1,\ldots, y_\ell) &=
    \bigvee_{i=1}^\ell \kdisj_m(x_i,y_i) \\
    &= \bigvee_{i=1}^\ell \bigvee_{j=1}^m x_{ij}\wedge y_{ij} \\
    &= \lkdisj_{\ell m}(x,y)\ .
  \end{align*}
  Thus, any protocol for $\orkdisj_m^{\ell}$ is also a protocol for
  $\lkdisj_{\ell m}$.  From Lemma~\ref{lem:kdisj-lb} it follows that
  $R(\orkdisj_m^{\ell}) = R(\lkdisj_{\ell m}) = \Omega(\min\{2\ell k,
  \ell m - 2\ell k\})$.
\end{proof}

Next, we summarize the terminology and main lemma for proving testing
lower bounds via communication complexity, reformulating the notation
in a way convenient for our results.  For more details, consult the
work of Blais et al.~\cite{BBM:12}.

\begin{definition}[Combining Operator]
  A \emph{combining operator} $\psi$ takes as input two functions $f,g
  : \pmbits{n}\rightarrow \{-1,+1\}$ and returns a
  function $h: \pmbits{n}\rightarrow R$.

  A combining operator is \emph{simple} if for all $f,g$ and for all
  $x\in \pmbits{n}$, $h(x)$ can be computed given only $x$ and the queries $f(x)$
  and $g(x)$.
\end{definition}

For a property $\calP$ and combining operator $\psi$, let
$\CC{\psi}{\calP}$ denote the communication problem where Alice and
Bob receive $f$ and $g$ respectively and wish to determine if
$\psi(f,g)$ has property $\calP$ or is far from having $\calP$.  The
connection between property testing and this communication game is
captured in the following lemma.

\begin{lemma}[Main Reduction Lemma~(\cite{BBM:12}, Lemma 2.4)]\label{lem:mrl}
  For any simple combining operator $\psi$ and any property $\calP$,
  we have $$R(\CC{\psi}{\calP}) \leq 2Q(\calP)\ .$$
\end{lemma}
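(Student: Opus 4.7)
The plan is to give a direct simulation: starting from any $Q(\calP)$-query property tester $T$ for $\calP$, Alice and Bob together simulate $T$ on the (virtual) input $h \deq \psi(f,g)$, using their public coins to agree on $T$'s internal random tape. The key observation that makes this cheap is exactly the simplicity of $\psi$: since for every query point $x$ the value $h(x)$ is determined by $x$, $f(x)$, and $g(x)$ alone, each query of $T$ can be answered after just two bits of communication -- one from Alice carrying $f(x)$, one from Bob carrying $g(x)$.

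More concretely, I would proceed as follows. First, Alice and Bob use their shared random string to sample the same tape $r$ for $T$, and both begin simulating $T$ with randomness $r$ step by step. Whenever the simulation asks for $h(x)$, Alice computes and sends $f(x)$ while Bob computes and sends $g(x)$; each player then applies the deterministic rule defining $\psi$ on input $(x, f(x), g(x))$ to recover $h(x)$, and both feed it into their respective copies of the simulation. Because both players see the same random tape and the same query answers at every step, their simulations stay synchronized, so after at most $Q(\calP)$ queries each player knows $T$'s output and outputs accordingly.

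For correctness, if $\psi(f,g) \in \calP$ then $T$ accepts with probability at least $2/3$ over $r$, and if $\psi(f,g)$ is $\eps$-far from $\calP$ then $T$ rejects with probability at least $2/3$; in either case the induced protocol inherits the same success probability, so it is a valid bounded-error protocol for $\CC{\psi}{\calP}$. The communication cost is exactly two bits per query of $T$, yielding a total of $2Q(\calP)$ bits and hence $R(\CC{\psi}{\calP}) \leq 2Q(\calP)$.

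There is no substantive obstacle here; the entire content of the lemma is packed into the definition of \emph{simple} combining operator, which is precisely the condition needed to answer each simulated query with $O(1)$ bits of communication and no further dependence on $f$ or $g$. The only minor point to verify is that adaptive testers cause no trouble, but this is handled automatically: synchronized randomness plus identical query answers means both players always agree on $T$'s next query, so the simulation proceeds without any additional coordination overhead.
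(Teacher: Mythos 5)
Your proof is correct and is the standard simulation argument behind Lemma 2.4 of Blais et al.~\cite{BBM:12}, which the paper cites without reproducing. You correctly identify every ingredient: public coins to synchronize the tester's random tape, two bits per simulated query (one per player), simplicity of $\psi$ so that both parties can locally reconstruct $h(x)$ from $x$, $f(x)$, $g(x)$, and the observation that adaptivity is handled for free once both parties maintain identical transcripts.
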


\subsection{{\bf Fourier Analysis of Boolean Functions}}
Our result on testing Fourier degree uses Fourier Analysis.  We briefly present
basic definitions and results here.
Consider the $2^n$-dimensional vector space of all functions $f:
\{-1,1\}^n \rightarrow \mathbb{R}$. An inner product on this space can
be defined as follows $$\langle f,g\rangle = \frac{1}{2^n} \sum_{x\in
  \{-1,1\}^n} f(x)g(x) = \Ex_x[f\cdot g],$$ where the latter expectation is
taken uniformly over all $x\in \{-1,1\}^n$. This defines the
$l_2$-norm $$||f||_2 = \sqrt{\langle f,f\rangle} = \sqrt{\Ex_x[f^2]}.$$

\begin{defn}
For $S\subseteq [n]$, the character $\chi_S: \{-1,1\}^n \rightarrow
\{-1,1\}$ is defined as $$\chi_S(x)= \prod_{i\in S} x_i.$$

The set of characters forms an orthonormal basis for the inner product
space. Hence, every function $f:\{-1,1\}^n \rightarrow \mathbb{R}$ can
be written uniquely as $$ f= \sum_S \langle f, \chi_S\rangle\cdot  \chi_S.$$
The above equation is referred to as the Fourier expansion of $f$, and
the Fourier coefficient of $f$ corresponding to set $S$ is defined
as $$\widehat{f}(S)= \langle f,\chi_S\rangle.$$
\end{defn}

{\em Parseval's Identity} states that 
\begin{equation}
\| f\|_2^2 = \sum_{S\subseteq[n]} \hat{f}(S)^2.
\end{equation}

\begin{defn}
The \textit{Fourier degree} of a Boolean function
$f:\{-1,1\}^n\rightarrow \{-1,1\}$ is equal to maximum $k>0$ such that
there exists $S\subseteq [n]$, $|S|=k$, for which $\widehat{f}(S)\neq
0$.
\end{defn}

\section{Testing Monotonicity}\label{sec:mono}
Before getting into the full proof of our lower bound, we give a
high-level description.  We use the technique of Blais et al.~and
reduce from $\ordisj$.  The function $h:\pmbits{n} \rightarrow R$ we
create uses the first $\log(1/\eps)$ bits as an index $t$.  Then, the
value of $h(t,z)$ is essentially the function used by Blais et al. in
their $\Omega(n)$ lower bound for testing monotonicity, evaluated on
the $t$-th pair of inputs for $\ordisj$.  If $x_t$ and $y_t$ intersect
in, say, the $i$-th bit, then $h$ will violate monotonicity on edges
in the $i$-th direction, but \emph{only} when the value of the index
is $t$.  This happens with probability $\eps$ (taken over a random
input to $h$).  Therefore, the overall distance to monotonicity is
$\Omega(\eps)$.

There is one final complication.  We need to ensure that monotonicity
is not spuriously violated in the $i$-th direction when $i$ is one of
the coordinates that defines $t$.  To manage this, we
increase the value of $h$ by $\Omega(|t|)$.  This ensures that
monotonicity is not violated in a direction corresponding to one of
the bits that make up $t$, no matter what happens with the rest of the
input to $h$.

Modulo a few technical complications and adjustment of variables, this
completes the proof that testing monotonicity for functions with high
range size requires $\Omega(n/\eps)$ queries.  For smaller range
sizes, we adopt some range reduction tricks from Blais et
al.~\cite{BBM:12} One technical complication arises, but conceptually,
the reductions are the same.

Before proving the lower bound for \mono, we define some functions and
demonstrate some related basic facts that will be useful for our
proof.  For $i \in \N$ and $x \in \pmbits{*}$, let
$x\iplus$ and $x\iminus$ denote the strings obtained from $x$ by
setting $x_i = +1$ and $x_i = -1$ respectively.

The following fact says that when flipping the $i$th bit of a string,
the value of the character changes only when $i\in S$.
\begin{fact}\label{fact:chi}
  For any $S \subseteq [n]$ and $x \in \pmbits{n}$, $\chi_S(x\iplus) =
  \chi_S(x\iminus)$ if and only if $i \not \in S$.
\end{fact}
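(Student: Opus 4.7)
The plan is to unfold the definition of the character function and verify both directions by splitting on whether $i \in S$. Recall that $\chi_S(x) = \prod_{j \in S} x_j$, and by construction the strings $x\iplus$ and $x\iminus$ agree on every coordinate except $i$, where $(x\iplus)_i = +1$ and $(x\iminus)_i = -1$.

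First, suppose $i \notin S$. Then the index $i$ never appears in the product defining $\chi_S$, so both $\chi_S(x\iplus)$ and $\chi_S(x\iminus)$ equal $\prod_{j \in S} x_j$ (using that $x\iplus$ and $x\iminus$ agree with each other on all coordinates in $S$). Hence the two values coincide. Conversely, suppose $i \in S$. Then I would factor out the $i$-th coordinate and write
\[
\chi_S(x\iplus) = (+1) \cdot \prod_{j \in S \setminus \{i\}} x_j, \qquad \chi_S(x\iminus) = (-1) \cdot \prod_{j \in S \setminus \{i\}} x_j.
\]
Since the remaining product is $\pm 1$ and in particular nonzero, the two values differ by a factor of $-1$, so $\chi_S(x\iplus) \neq \chi_S(x\iminus)$. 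Combining the two cases gives the equivalence.

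There is no serious obstacle here; the statement is an immediate consequence of the product form of $\chi_S$. The only thing to be careful about is the logical direction: the "only if" clause requires noting that flipping a sign in a nonempty product of $\pm 1$ values always changes the answer, which is precisely why the remaining factor $\prod_{j \in S \setminus \{i\}} x_j$ being $\pm 1$ (and thus invertible over $\R$) is the key observation.
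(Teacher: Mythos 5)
Your proof is correct and follows the natural argument: unfold $\chi_S$ as a product, note that $x\iplus$ and $x\iminus$ differ only in coordinate $i$, and split on whether $i$ lies in $S$; the paper states Fact~\ref{fact:chi} without proof precisely because this is immediate. The one point you flag — that the remaining factor $\prod_{j \in S\setminus\{i\}} x_j$ is nonzero, so flipping the sign of the $i$-th factor genuinely changes the value — is exactly the (trivial but necessary) observation that makes the ``only if'' direction go through.
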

In Section~\ref{sec:mono-lb}, we will need to manipulate sums of
certain character functions.  The corollary below easily follows from
Fact~\ref{fact:chi}.
\begin{corollary}\label{cor:chi}
  For any $S,T \subseteq [n]$, $i \in [n]$, and $x \in \pmbits{n}$, we
  have 
  \begin{itemize}
  \item
    $\chi_S(x\iplus) + \chi_T(x\iplus) - \chi_S(x\iminus) -
    \chi_T(x\iminus) = -4$ if $i \in S\cap T$ and $\chi_S(x\iplus) =
    \chi_T(x\iplus) = -1$,
  \item
    $\chi_S(x\iplus) + \chi_T(x\iplus) - \chi_S(x\iminus) -
    \chi_T(x\iminus) \geq -2$ otherwise.
  \end{itemize}
\end{corollary}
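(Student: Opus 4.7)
The plan is to do a straightforward case analysis on whether $i$ lies in $S$, in $T$, in both, or in neither, using Fact~\ref{fact:chi} to control each difference $\chi_S(x\iplus) - \chi_S(x\iminus)$ and $\chi_T(x\iplus) - \chi_T(x\iminus)$ separately.

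First, I would observe that by Fact~\ref{fact:chi}, the difference $A \deq \chi_S(x\iplus) - \chi_S(x\iminus)$ equals $0$ when $i \notin S$, and equals $\pm 2$ when $i \in S$ (with sign determined by the sign of $\chi_S(x\iplus)$, since $\chi_S(x\iminus) = -\chi_S(x\iplus)$ in that case). The analogous statement holds for $B \deq \chi_T(x\iplus) - \chi_T(x\iminus)$. In particular, each of $A$ and $B$ individually lies in $\{-2, 0, +2\}$, so trivially $A+B \geq -4$.

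Next I would argue that $A + B = -4$ forces both $A = -2$ and $B = -2$. The first forces $i \in S$ and $\chi_S(x\iplus) = -1$ (so that $\chi_S(x\iminus) = +1$), and similarly the second forces $i \in T$ and $\chi_T(x\iplus) = -1$. This gives exactly the hypothesis of the first bullet: $i \in S \cap T$ and $\chi_S(x\iplus) = \chi_T(x\iplus) = -1$. Conversely, under that hypothesis the same computation yields $A = B = -2$ and hence $A + B = -4$, establishing the first bullet as an ``if and only if.''

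Finally, in every other situation at least one of the following holds: $i \notin S$ (so $A = 0$, giving $A+B \geq -2$); $i \notin T$ (so $B = 0$, giving $A+B \geq -2$); or $i \in S \cap T$ but not both characters evaluate to $-1$ at $x\iplus$, in which case $\{A,B\}$ is not $\{-2,-2\}$ and $A+B \in \{0, +4\}$. In all of these cases $A + B \geq -2$, which is the second bullet. There is no real obstacle here; the only thing to be careful about is tracking the correspondence between the sign of $A$ and the sign of $\chi_S(x\iplus)$ (as opposed to $\chi_S(x\iminus)$) so that the stated hypothesis matches the case that actually achieves $-4$.
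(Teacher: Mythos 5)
Your case analysis is correct and is exactly the elementary argument the paper has in mind when it says the corollary ``easily follows from Fact~\ref{fact:chi}'' (the paper does not spell out the details). You correctly reduce everything to the observation that each of $A=\chi_S(x\iplus)-\chi_S(x\iminus)$ and $B=\chi_T(x\iplus)-\chi_T(x\iminus)$ lies in $\{-2,0,2\}$ and equals $-2$ exactly when $i$ is in the relevant set and the character at $x\iplus$ is $-1$, and your three-way split of the ``otherwise'' case is exhaustive.
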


Next, we generalize character functions to operate on lists of
strings.  Recall that we associate strings $x_j \in \pmbits{m}$ with
the corresponding set $\{r : x_{jr} = 1\}$.
\begin{definition}
  Let $x = (x_1,\ldots, x_\ell) \in \pmbits{\ell m}$.  The
  function $f_x : \pmbits{\log(\ell) + m}\rightarrow
  \pmbit$ is defined as $$f_x(j, z) \deq \chi_{x_j}(z)\ .$$
\end{definition}
We adopt the convention of mapping $x \rightarrow f_x$ and $y
\rightarrow g_y$.
The following fact is an analogue of Corollary~\ref{cor:chi} and
specifies what happens to the sum of (generalized) character functions
when we flip an input bit from $-1$ to $1$.
\begin{fact}\label{fact:chi2}
  For any $x,y \in \pmbits{\ell m}, t \in \pmbits{\log(\ell)}$, and
  any $z \in \pmbits{m}$, the following hold.
  \begin{enumerate}
  \item
    For all $i \in [\log(\ell)]$, we have $$f_x(t\iplus,z) +
    g_y(t\iplus,z) - f_x(t\iminus,z) - g_y(t\iminus,z) \geq -4\ .$$
  \item
    For all $i \in [m]$, we have $$f_x(t,z\iplus) + g_y(t,z\iplus) -
    f_x(t,z\iminus)- g_x(t,z\iminus) = -4\ ,$$ if $i \in x_t \cap y_t$
    and $f_x(t,z\iplus) = g_y(t,z\iplus) = -1$.  Otherwise, we have
    $$f_x(t,z\iplus) + g_y(t,z\iplus) - f_x(t,z\iplus) - g_y(t,z\iplus)
    \geq -2\ .$$
  \end{enumerate}
\end{fact}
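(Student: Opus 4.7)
The plan is to verify each of the two parts by unpacking the definition of $f_x$ and $g_y$ and reducing to the character-function observations already proved. Recall that by definition $f_x(t,z) = \chi_{x_t}(z) \in \pmbit$ and similarly $g_y(t,z) = \chi_{y_t}(z) \in \pmbit$, so in both parts we are bounding a signed sum of four $\pm 1$ values.

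For part 1, I would simply use the fact that each of the four quantities $f_x(t\iplus,z)$, $g_y(t\iplus,z)$, $f_x(t\iminus,z)$, $g_y(t\iminus,z)$ lies in $\pmbit$. Since we add two of them and subtract the other two, the whole expression lies in $[-4,4]$, which already gives the desired lower bound of $-4$. The reason we cannot in general do better is that flipping a bit of $t$ switches the index selecting $x_t$ (and $y_t$), so the characters on the two sides of the flip are potentially entirely different characters evaluated at the same $z$, with no correlation to exploit.

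For part 2, I would fix $t$ and observe that once $t$ is fixed, $f_x(t,\cdot) = \chi_{x_t}$ and $g_y(t,\cdot) = \chi_{y_t}$ are honest characters on $\pmbits{m}$. Thus the expression in question is exactly the left-hand side of Corollary~\ref{cor:chi}, instantiated with $S = x_t$, $T = y_t$, and evaluated at $z$. The first bullet of Corollary~\ref{cor:chi} handles the case $i \in x_t \cap y_t$ together with $\chi_{x_t}(z\iplus) = \chi_{y_t}(z\iplus) = -1$ and gives the equality to $-4$; the second bullet handles every remaining case and yields the lower bound of $-2$.

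Since the statement is essentially just a rewrite of Corollary~\ref{cor:chi} in the new indexed notation, plus a trivial bound on a sum of four $\pm 1$ terms, I do not foresee any substantive obstacle. The only thing to be careful about is matching the quantifiers correctly: $t$ is held fixed throughout part 2 so that ``$x_t$'' is a single well-defined set, whereas in part 1 the point of the $-4$ bound is exactly that $t\iplus$ and $t\iminus$ may select two unrelated sets. The fact is recorded in this form solely so that the monotonicity argument in Section~\ref{sec:mono} can quote the two inequalities directly without reopening character-function calculations.
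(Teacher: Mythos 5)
Your proposal is correct and matches the approach the paper implicitly uses: the paper gives no explicit proof of Fact~\ref{fact:chi2}, only the remark that it is ``an analogue of Corollary~\ref{cor:chi},'' and your argument --- the trivial $\pm 1$ bound for part~1 and the direct instantiation of Corollary~\ref{cor:chi} with $S = x_t$, $T = y_t$ for part~2 --- is exactly the reasoning that remark points to. (You also implicitly correct two typos in the Fact's displayed formulas, $g_x(t,z\iminus)$ for $g_y(t,z\iminus)$ and $z\iplus$ for $z\iminus$ in the last display, which is the right reading.)
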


\subsection{The Monotonicity Lower Bound}\label{sec:mono-lb}

\begin{theorem}\label{thm:mono}
  Fix $\eps$ such that $2^{-n/2} < \eps < 1/10$.  Then,
  testing $h:\bits{n} \rightarrow R$ for monotonicity requires
  $\Omega(\min\{n/\eps, |R|^2/\eps\}))$ queries.
\end{theorem}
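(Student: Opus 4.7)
The plan is to apply the Main Reduction Lemma~\ref{lem:mrl} via a simple combining operator reducing the direct-sum disjointness problem $\ordisj_m^\ell$ to monotonicity testing. Setting $\ell := 1/\eps$ and $m := n - \log \ell$, I define $\psi(f_x, g_y) = h$ by
\[
h(t,z) := 4|t| + 2|z| + f_x(t,z) + g_y(t,z),
\]
where $t \in \pmbits{\log \ell}$ indexes which pair of disjointness instances to consult and $z \in \pmbits{m}$ is the remaining content. The operator is simple because $h(t,z)$ depends only on $t$, $z$, and the two queried values $f_x(t,z) = \chi_{x_t}(z)$, $g_y(t,z) = \chi_{y_t}(z)$. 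The role of the $\log(1/\eps)$ index bits is to restrict all violations to a single slice, producing the $\eps$ factor in the distance; the $4|t|$ correction prevents spurious violations in the $t$-direction.

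For completeness, suppose $\ordisj_m^\ell(x,y) = -1$, so that every pair $(x_j,y_j)$ is disjoint. By Fact~\ref{fact:chi2}(1), flipping a $t$-bit from $-1$ to $+1$ raises $4|t|$ by $4$ and drops $f_x + g_y$ by at most $4$, so $h$ does not decrease. By Fact~\ref{fact:chi2}(2), flipping a $z$-bit raises $2|z|$ by $2$, and $f_x + g_y$ drops by at most $2$ (the $-4$ case requires an intersection, which does not occur). Hence $h$ is monotone. For soundness, suppose $\ordisj_m^\ell(x,y) = +1$, and fix $t^*$ and $i^* \in x_{t^*} \cap y_{t^*}$, padding the inputs so that $|x_{t^*}|, |y_{t^*}| \geq 2$. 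By Fact~\ref{fact:chi2}(2), for every $z \in \pmbits{m}$ satisfying $\chi_{x_{t^*}}(z^{i^*+}) = \chi_{y_{t^*}}(z^{i^*+}) = -1$, the edge $\{(t^*, z^{i^*-}), (t^*, z^{i^*+})\}$ is violated: $f_x + g_y$ drops by $4$ while $4|t|+2|z|$ rises by only $2$. This constraint on $z$ is a conjunction of two (independent or equal) parity conditions, satisfied by at least a $1/4$-fraction of $z$, yielding a matching of $\geq 2^{m-3}$ disjoint violated edges. Thus the distance from $h$ to any monotone function is at least $2^{m-3}/2^n = \Omega(1/\ell) = \Omega(\eps)$.

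Applying Lemma~\ref{lem:mrl} together with the bound $R(\ordisj_m^\ell) = \Omega(\ell m) = \Omega(n/\eps)$ from Lemma~\ref{lem:ordisj}(1), and using $\eps > 2^{-n/2}$ to guarantee $m \geq n/2$, yields $Q(\mono) = \Omega(n/\eps)$. Since the range of $h$ is $O(n)$, this directly proves the lower bound for $|R| = \Omega(n)$. For $|R| \lesssim n$, I would switch to $\orkdisj_m^\ell$ with $k = \Theta(|R|)$ and $m = \Theta(|R|)$, and replace the $2|z|$ term by a bounded-range analogue tailored to sparse inputs (following the range-reduction of Blais et al.~\cite{BBM:12}), producing a combining operator whose output has range $O(|R|)$. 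Choosing $\ell = \Theta(\min\{|R|/\eps, n/(|R|\eps)\})$ and invoking Lemma~\ref{lem:ordisj}(2) then delivers the bound $\Omega(\ell k) = \Omega(\min\{n/\eps, |R|^2/\eps\})$ in this regime.

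The main obstacle is verifying the soundness bound once the $4|t|$ correction is in place: one has to confirm that this correction, which is essential to eliminate spurious $t$-direction violations, does not also kill the intersection-induced $z$-direction violations. This is clean because the correction is a function of $|t|$ alone and is constant on any slice $t = t^*$, so within that slice the $-4$ drop in $f_x + g_y$ continues to dominate the $+2$ rise of $2|z|$. A secondary technicality is that the intersection-induced violation requires $|x_{t^*}|, |y_{t^*}| \geq 2$ (otherwise $\chi_{x_{t^*}}(z^{i^*+})$ is trivially $+1$), which I handle by padding Alice's and Bob's inputs before invoking the combining operator.
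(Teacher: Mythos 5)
Your core construction matches Claim~\ref{claim:part1} of the paper essentially exactly: same combining operator $h(t,z) = 4|t| + 2|z| + f(t,z) + g(t,z)$, same reduction from $\ordisj_m^\ell$, same use of Fact~\ref{fact:chi2} for completeness and for showing an $\Omega(\eps)$-fraction of violated edges, up to a constant-factor rescaling of $\ell$. You also flag a subtlety the paper elides: the claim that $f_x(t,z\iplus) = g_y(t,z\iplus) = -1$ holds with probability $1/4$ over random $z$ fails if, say, $x_t = \{i\}$ is a singleton (then $\chi_{x_t}(z\iplus) = +1$ always). Your padding fix is a sound way to dispose of this edge case, which the paper implicitly assumes away.

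Where the proposal stops being a proof is the small-range regime. The paper handles $|R| = O(\sqrt{n})$ in two concrete steps: Claim~\ref{claim:part2} truncates $h$ by mapping $\{|z| \geq m/2 + \ch\sqrt{m}\} \mapsto +\infty$ and $\{|z| \leq m/2 - \ch\sqrt{m}\} \mapsto -\infty$, producing $h'$ with range $\Theta(\sqrt{m})$ while losing only a constant factor in the distance to monotonicity; Claim~\ref{claim:part3} then invokes Claim~4.2 of~\cite{BBM:12} to reduce the domain from $\bits{n}$ to $\bits{m}$ with $m = \Theta(|R|^2)$, after which the $\Omega(m/\eps)$ bound gives $\Omega(|R|^2/\eps)$. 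Your sketch instead proposes switching to $\orkdisj_m^\ell$ with $k,m = \Theta(|R|)$ and an unspecified ``bounded-range analogue'' of $2|z|$. This leaves a genuine gap: the combined function then lives on a domain of dimension $\log\ell + m \ll n$ when $|R| \ll n$, so you still need the domain-shrinking step of the paper's Claim~\ref{claim:part3}, which you do not supply; moreover this is not actually how~\cite{BBM:12} does range reduction for monotonicity (their monotonicity argument never passes through sparse disjointness), so ``following Blais et al.'' does not fill the gap. The first part of the argument is correct and matches the paper; the small-range part names the right toolbox but does not carry out the construction.
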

Define $\ell \deq 1/8\eps$, and assume $\log(\ell)$ is an integer.
Let $m \deq n-\log(\ell)$.
Theorem~\ref{thm:mono} follows directly from the following claims:

\begin{claim}\label{claim:part1}
  If $|R| = \Omega(n)$ then testing $h:\bits{n}\rightarrow R$ for
  monotonicity requires $\Omega(n/\eps)$ queries.
\end{claim}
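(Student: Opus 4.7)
My plan is to instantiate the reduction framework of Blais et al.~(Lemma~\ref{lem:mrl}) with the communication problem $\ordisj_m^{\ell}$, for $\ell \deq 1/(8\eps)$ and $m \deq n - \log(\ell)$. Under the hypothesis $\eps > 2^{-n/2}$ this gives $\ell m = \Theta(n/\eps)$. Given inputs $x, y \in \pmbits{\ell m}$, Alice and Bob locally form the character functions $f_x(t, z) = \chi_{x_t}(z)$ and $g_y(t, z) = \chi_{y_t}(z)$ from the preliminaries, and combine them through the operator
$$\psi(f, g)(t, z) \deq f(t, z) + g(t, z) + 2|z| + 4|t|.$$
This $\psi$ is simple, since $\psi(f,g)(t,z)$ depends only on $(t,z)$ and the two queries $f(t, z), g(t, z)$; its output is an even integer in $[-2, 2m + 4\log(\ell) + 2]$, so a range of size $O(n)$ suffices and the hypothesis $|R| = \Omega(n)$ can be arranged to accommodate it.

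For the disjoint case, I want to show that whenever $x_t \cap y_t = \emptyset$ for every $t$, $\psi(f_x, g_y)$ is monotone. Along a $z$-direction edge (fixed $t$, coordinate $i \in [m]$), the hypothesis $i \notin x_t \cap y_t$ puts Fact~\ref{fact:chi2}(2) in its ``otherwise'' regime, so $f_x + g_y$ drops by at most $2$ while $2|z|$ gains exactly $2$. Along a $t$-direction edge (coordinate $i \in [\log \ell]$), Fact~\ref{fact:chi2}(1) bounds the drop in $f_x + g_y$ by $4$ while $4|t|$ gains exactly $4$. In either case $\psi(f_x, g_y)$ is nondecreasing, establishing monotonicity.

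For the intersecting case, I pick a witness $t^*$ with some $i^* \in x_{t^*} \cap y_{t^*}$ and produce a matching of violating edges in direction $i^*$, restricted to the slab $\{t = t^*\}$. For every $z$ with $z_{i^*} = -1$ such that $\chi_{x_{t^*}}(z\iplus) = \chi_{y_{t^*}}(z\iplus) = -1$, the ``if'' clause of Fact~\ref{fact:chi2}(2) gives a drop of $4$ in $f_x + g_y$ while $2|z|$ only gains $2$, so the edge $(t^*, z) \to (t^*, z\iplus)$ violates monotonicity by~$2$. A short Fourier count then shows that the bad condition holds on at least a $1/4$-fraction of such $z$ (using linear independence of the restricted characters $\chi_{x_{t^*} \setminus \{i^*\}}$ and $\chi_{y_{t^*} \setminus \{i^*\}}$ whenever $x_{t^*} \neq y_{t^*}$, with the coincident case $x_{t^*} = y_{t^*}$ yielding a $1/2$-fraction instead). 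This produces $\Omega(2^m)$ edge-disjoint violations, which forces the distance to monotonicity to be at least $\Omega(2^m / 2^n) = \Omega(1/\ell) = \Omega(\eps)$, and by adjusting the constant in $\ell$ this can be made at least $\eps$.

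Putting these together and applying Lemma~\ref{lem:mrl} gives $R(\ordisj_m^{\ell}) \leq 2 Q(\mono)$, and then Lemma~\ref{lem:ordisj}(1) yields $Q(\mono) = \Omega(\ell m) = \Omega(n/\eps)$. The main obstacle I expect is the distance lower bound: nailing down a genuinely constant Fourier-count density of violations (rather than something like $2^{-\Theta(|x_{t^*} \triangle y_{t^*}|)}$) requires the case split on whether $x_{t^*}$ and $y_{t^*}$ agree outside $i^*$, together with the observation that characters indexed by any two distinct subsets are orthonormal; once this density is settled, the rest is a routine composition of the reduction lemma with the $\Omega(\ell m)$ lower bound for $\ordisj_m^{\ell}$.
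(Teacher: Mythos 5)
Your proposal matches the paper's proof in all essentials: same choice of $\ell = 1/(8\eps)$ and $m = n - \log\ell$, same combining operator $h(t,z) = 4|t| + 2|z| + f(t,z) + g(t,z)$, same reduction from $\ordisj_m^\ell$, same case analysis via Fact~\ref{fact:chi2} for $t$-direction and $z$-direction edges, and the same matching-of-violating-edges count giving distance $\geq \eps$. Your slightly more explicit Fourier-orthogonality justification for the $1/4$ density (including the $x_{t^*}=y_{t^*}$ case split) is a harmless elaboration of what the paper states without proof, and with $\ell = 1/(8\eps)$ the distance comes out to exactly $\eps$, so no constant adjustment is actually needed.
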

\begin{claim}\label{claim:part2}
  There exists a constant $c$ such that if $|R| \geq c \sqrt{\nh}$ then
  testing $h:\bits{n} \rightarrow R$ for monotonicity requires
  $\Omega(n/\eps)$ queries.
\end{claim}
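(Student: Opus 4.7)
The plan is to carry out the same reduction as in Claim~\ref{claim:part1} but using a range-reduction variant of the combining operator, in the style of the $\Omega(|R|^2)$ lower bound of Blais et al.~\cite{BBM:12}. I reduce from $\ordisj_m^{\ell}$ with $\ell = 1/(8\eps)$ and $m = n-\log\ell$, so that $R(\ordisj_m^{\ell}) = \Omega(\ell m) = \Omega(n/\eps)$ by Lemma~\ref{lem:ordisj}. Given Alice's input $x$ and Bob's input $y$ to $\ordisj_m^{\ell}$, I design a simple combining operator producing $h : \pmbits{n} \to [R]$ with $|R| = O(\sqrt{n})$ such that $h \in \mono$ when $\ordisj_m^{\ell}(x,y) = -1$ and $h$ is $\Omega(\eps)$-far from $\mono$ when $\ordisj_m^{\ell}(x,y) = +1$; Lemma~\ref{lem:mrl} then yields the claim.

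Concretely, the construction mirrors Claim~\ref{claim:part1} in its use of $t \in \pmbits{\log \ell}$ as a copy index and in the character-based definitions of $f_x, g_y$, but replaces the linear offset $2|z| + 4|t|$ with a BBM-style range-compressed offset on $z$. Since insisting on slope $2$ in every $z$-direction necessarily forces range $\Theta(m)$, this compressed offset $\phi$ must instead be designed so that (i) $\phi$ takes only $O(\sqrt{m})$ distinct values, (ii) an $\Omega(1)$ fraction of $z$-directions have $\Delta_i \phi = 2$ (the ``active'' directions), and (iii) in the remaining ``inactive'' directions the characters $f_x, g_y$ are chosen so that the sum $f_x + g_y$ is itself non-decreasing. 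Intersections in the active copy $t^{*}$ at active coordinates then produce $\Omega(1)$-density violating edges among inputs with $t = t^{*}$, hence $\Omega(1/\ell) = \Omega(\eps)$ density overall, which by a standard matching argument translates to distance $\Omega(\eps)$ from $\mono$.

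The main obstacle -- the ``one technical complication'' promised in the outline -- is that BBM's range reduction and the $t$-indexed direct-sum were originally developed independently, and composing them requires a careful re-examination of the structural facts. Specifically, the analogue of Fact~\ref{fact:chi2}(2) must be re-derived for the new non-linear offset: every $z$-direction needs $\Delta_i \phi + \Delta_i (f_x + g_y) \ge 0$ at non-intersections and $<0$ at intersections, which must now be verified on a direction-by-direction basis rather than through the single clean global identity used in Claim~\ref{claim:part1}. Additionally, the $4|t|$ correction that handled the $t$-directions in Claim~\ref{claim:part1} must remain compatible with the compressed $z$-offset without pushing the total range above $O(\sqrt{n})$. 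Once these verifications are in place, simplicity of $\psi$ and the farness analysis follow the same template as Claim~\ref{claim:part1}.
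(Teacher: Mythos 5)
Your plan diverges substantially from the paper's actual argument, and it contains a genuine gap. The paper proves Claim~\ref{claim:part2} by keeping the exact same combining operator $h(t,z) = 4|t| + 2|z| + f(t,z) + g(t,z)$ from Claim~\ref{claim:part1} and then \emph{truncating} it: fix $\ch$ so that $\big||z|-\nh/2\big|\leq \ch\sqrt{\nh}$ with probability $15/16$, and set $h'(t,z)=+\infty$ when $|z|\geq \nh/2+\ch\sqrt{\nh}$, $h'(t,z)=-\infty$ when $|z|\leq \nh/2-\ch\sqrt{\nh}$, and $h'=h$ otherwise. Any edge touching a truncated point is automatically non-violating, so $h'$ inherits monotonicity from $h$ when the inputs are disjoint; in the intersecting case only a $1/16$ fraction of the mass at each bad copy $t$ is lost to truncation, so the distance is still $\Omega(\eps)$. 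The range is a window of size $O(\sqrt{\nh})$ plus the two sentinel values. Crucially, the slope $2$ in every $z$-direction is \emph{retained} on the window; truncation, not a redesigned offset, is what compresses the range, so your premise that ``slope $2$ in every $z$-direction necessarily forces range $\Theta(m)$'' is exactly what the clipping trick defeats.

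Your alternative plan --- a compressed offset $\phi$ with only an $\Omega(1)$ fraction of ``active'' directions --- cannot be made to work with the character-based $f_x, g_y$. Requirement (iii), that $f_x + g_y$ be non-decreasing in every inactive direction, fails whenever an inactive coordinate $i$ lies in $x_t$ but not $y_t$ (or vice versa): then $\chi_{x_t}$ flips sign while $\chi_{y_t}$ is unchanged, so $f_x+g_y$ drops by $2$ with $\Delta_i\phi = 0$, producing spurious violations even when $x_t\cap y_t = \emptyset$. Avoiding this would force the active coordinates to cover the supports of all $x_t, y_t$, i.e.\ essentially all of $[m]$, which contradicts the size restriction on the range. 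Since Alice and Bob cannot coordinate the supports of $x_t, y_t$ in advance (that is the whole point of the disjointness reduction), the inactive-direction non-decrease you posit is unachievable. The truncation route sidesteps all of this and is what you should use.
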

\begin{claim}\label{claim:part3}
  If $R = O(\sqrt{\nh})$ then testing $h:\bits{n} \rightarrow R$ for
  monotonicity requires $\Omega(|R|^2/\eps)$ queries.
\end{claim}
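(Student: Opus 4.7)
My plan is to mirror the template of Claim~\ref{claim:part1}, but to run the reduction at a smaller intersection parameter so that the combined function's range fits inside $R$. I would again take $\ell \deq 1/(8\eps)$ and reduce from $\orkdisj_m^{\ell}$, but now set $k = \Theta(|R|^2)$. The hypothesis $|R| = O(\sqrt{m})$ ensures $k \leq m/3$, so Lemma~\ref{lem:ordisj}(2) yields $R(\orkdisj_m^{\ell}) = \Omega(\ell k) = \Omega(|R|^2/\eps)$; then, provided a simple combining operator $\psi$ reducing $\orkdisj_m^{\ell}$ to $\mono$ can be constructed, Lemma~\ref{lem:mrl} immediately supplies the matching testing lower bound.

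The combining operator would reuse the index-and-gadget architecture of Claim~\ref{claim:part1}: the first $\log\ell$ input bits form an index $t$ which selects the pair $(x_t, y_t)$, and the remaining $m$ coordinates feed into a gadget built from $f_x(t, \cdot)$ and $g_y(t, \cdot)$. In place of the single-character gadget, I would import the range-reduced variant of~\cite{BBM:12}: partition the $m$ coordinates into $\Theta(|R|)$ blocks, view the $k$ active coordinates of $x_t$ (resp.\ $y_t$) as descriptors of a small character per block, and sum across blocks so that the total gadget value lives in an $O(|R|)$-sized set. A bias in $|t|$ large enough to dominate the worst-case per-edge drop of the gadget in any $t$-direction is added, and the $O(|R|)$-sized range is then mapped bijectively into $R$. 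With this $\psi$ in hand, the verifications I would carry out are: (i) simplicity, which is immediate; (ii) completeness --- if $\orkdisj_m^{\ell}(x,y) = -1$, no hypercube edge decreases $h$, so $h \in \mono$, by the block-sum analogue of Corollary~\ref{cor:chi}; and (iii) soundness --- if $\orkdisj_m^{\ell}(x,y) = +1$ with an intersecting pair at $t^*$, the analogue of Fact~\ref{fact:chi2} on the slice $\{t^*\} \times \pmbits{m}$ produces an $\Omega(1)$-fraction of strictly $h$-decreasing edges, and since that slice occupies a $1/\ell = \Theta(\eps)$ fraction of the full domain, the distance to $\mono$ is $\Omega(\eps)$.

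The main obstacle is balancing the $|t|$-bias against the range-reduced gadget. The per-edge variation of the block-sum gadget in the $t$-directions can be as large as $\Theta(|R|)$, which forces the bias coefficient to be $\Omega(|R|)$, and a naive accounting then inflates the range to $\Theta(|R|\log(1/\eps))$. My fallback is to merge the $t$-index into the block structure itself --- e.g.\ by letting $t$ select which block is ``active'' rather than serving as an independent coordinate --- so that no additive bias is needed and the range stays at $O(|R|)$ throughout; the heart of the technical work will be verifying that this variant still preserves both the $-1 \Rightarrow$ monotone implication and the $+1 \Rightarrow \Omega(\eps)$-far implication, which should follow from a per-slice adaptation of Fact~\ref{fact:chi2} once the block-level invariants are set up correctly.
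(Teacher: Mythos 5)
Your approach diverges sharply from the paper's, and the divergence is not in your favor. The paper dispenses with Claim~\ref{claim:part3} in three lines: it invokes Claim~4.2 of~\cite{BBM:12}, which is a generic dimension-reduction statement saying that a $q$-query monotonicity tester over $\bits{n}$ with $|R| = O(\sqrt{n})$ yields a $q$-query tester over $\bits{m}$ with $m = \Omega(|R|^2)$; combining this with the already-proven Claim~\ref{claim:part2} (which gives $\Omega(m/\eps)$ for range $\geq c\sqrt m$) immediately forces $q = \Omega(m/\eps) = \Omega(|R|^2/\eps)$. No new combining operator, no new gadget, and crucially no need to compress the gadget range below $\Theta(\sqrt{m})$ --- the range reduction is entirely black-boxed into the BBM lemma, and the communication reduction is the one already done for Claim~\ref{claim:part2}.

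You instead attempt to build, from scratch, a combining operator whose output lives directly in an $O(|R|)$-sized range, and you correctly diagnose why the obvious attempt fails: once the gadget is a block-sum with variation $\Theta(|R|)$ in each $t$-direction, the additive bias must be $\Omega(|R|)$ per $t$-bit, and over $\log(1/\eps)$ index bits the range inflates to $\Theta(|R|\log(1/\eps))$, which violates the $|R| = O(\sqrt{m})$ hypothesis whenever $\log(1/\eps)$ is nonconstant. That diagnosis is accurate, but the proposed fallback --- letting $t$ ``select the active block'' so that no bias is needed --- is not a construction, it is a direction. You have not specified what the gadget evaluates to on a generic input, why flipping a $t$-bit cannot decrease $h$ without a bias term (the whole point of the bias was to absorb the fact that changing $t$ swaps out the entire $m$-bit gadget), nor why the resulting object still has distance $\Omega(\eps)$ from monotone in the yes case of $\orkdisj$. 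As written, the soundness and completeness of the fallback operator are exactly the unverified ``heart of the technical work'' you defer, so the proof has a genuine gap at its central step. The paper's route avoids all of this by never asking the combining operator to have a small range in the first place.
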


In the rest of this section we prove the above claims.
\begin{proof}[Proof of Claim~\ref{claim:part1}]
  We reduce from $\ordisj_\nh^{\epsh}$.
  Let $\psi$ be the combining operator that, given functions $f,g:
  \pmbits{n} \rightarrow \pmbit$, returns the function
  $h:\pmbits{n}\rightarrow \ZZ$ defined by
  $$h(t,z) \deq 4|t| + 2|z| + f(t,z) + g(t,z)\ .$$ Define
  $\CC{\psi}{\mono}$ to be the communication game where Alice and Bob
  are given functions $f$ and $g$ respectively and must test whether
  $h$ is monotone or $\eps$-far from monotone.  By
  Lemma~\ref{lem:mrl} and Lemma~\ref{lem:ordisj}, we have
  $$ 2\,Q(\monotone) \geq R(\CC{\psi}{\monotone}) \quad \mbox{and}
  \quad R(\ordisj_{\nh}^{\epsh}) = \Omega(\epsh\nh) =
  \Omega(n/\eps)\ .$$ We complete the proof by showing that
  $R(\CCmono) \geq R(\ordisj_{\nh}^{\epsh})$.  Given
  $\ordisj_\nh^\epsh$ inputs \mbox{$x = (x_1,\ldots, x_\ell)$} and \mbox{$y =
  (y_1,\ldots, y_\ell)$}, Alice and Bob construct functions $f_x$ and
  $g_y$ respectively.  We claim that when $x_t$ and $y_t$ are disjoint
  for all $t \in [\ell]$, $h$ is monotone, and conversely when $x_t$
  and $y_t$ intersect for some $t$, $h$ is $\eps$-far from monotone.
  To see this, fix an index $t$ and a string $z$.

  Our first task is to show that monotonicity cannot be violated in a
  direction corresponding to the index $t$.  Suppose that $i
  \in [\log(\ell)]$, and consider $h(t\iplus,z) - h(t\iminus,z)$.  We
  have
  \begin{align*}
    h(t\iplus,z) - h(t\iminus,z) &= 4|t\iplus| + 2|z| +
    f_x(t\iplus,z) + g_y(t\iplus,z) \\ &\quad - 4|t\iminus| - 2|z| -
    f_x(t\iminus,z) - g_y(t\iminus,z) \\ &= 4 + f_x(t\iplus,z) +
    g_y(t\iplus,z) - f_x(t\iminus,z) - g_y(t\iminus,z) \\ &\geq 0\ ,
  \end{align*}
  where the inequality holds by Fact~\ref{fact:chi2}.  This shows that
  monotonicity is never violated in the direction of an index
  coordinate.

  Our next task is to show that if $i$ is a cooordinate corresponding
  to $z$, then monotonicity is violated if and only if $x_t$ and $y_t$
  intersect.  Formally, let $i\in [m]$, and consider
  $h(t,z\iplus)-h(t,z\iminus)$.  We have
  \begin{align*}
    h(t,z\iplus) - h(t,z\iminus) &= 4|t| +2|z\iplus| + f_x(t,z\iplus)
    + g_y(t,z\iplus) \\ &\quad -4|t| - 2|z\iminus| - f_x(t,z\iminus)
    - g_y(t,z\iminus) \\ &= 2 + f_x(t,z\iplus) + g_y(t,z\iplus) -
    f_x(t,z\iminus) - g_y(t,z\iminus) \ .
  \end{align*}
  By Fact~\ref{fact:chi2}, this is negative if and only if $i \in x_t
  \cap y_t$ and $f_x(t,z\iplus) = g_y(t,z\iminus) = -1$.  Furthermore,
  this latter condition happens with probability $1/4$, where the
  probability is over a random $z$.

  Together, these cases show that $h$ is monotone when $x_t \cap y_t =
  \emptyset$ for all $t$.  On the other hand, if $x_t$ and $y_t$
  intersect for some $t$, fix $i \in x_t\cap y_t$.  For a random $z$,
  $h(t,z\iplus) - h(t,z\iminus) < 0$ with probability $1/4$; thus, a
  $(1/4)$-fraction of edges in the $i$th direction are violated when
  the index equals $t$, and a random index equals $t$ with probability
  $1/\epsh = 8\eps$.  To attain a monotone function, we must change at
  least one endpoint of each the violating edges in the $i$th
  coordinate.  Therefore, we must change at least $2^n\cdot(1/4)\cdot
  1/\epsh \cdot (1/2) = \eps \cdot 2^n$ points to arrive at a monotone
  function.  Hence, $h$ is $\eps$-far from monotone.
\end{proof}

\begin{proof}[Proof of Claim~\ref{claim:part2}]
  Fix a constant $\ch$ such that $\big| |z| - \nh/2 \big| \leq
  \ch\sqrt{\nh}$ with probability $15/16$.  Using $h$ as defined in
  Claim~\ref{claim:part1}, we define $h':\pmbits{n}\rightarrow R$ by
  $$h'(t, z) = \begin{cases} +\infty & \text{if } |z| \geq \nh/2 + \ch\sqrt{\nh} \ . \\
    -\infty & \text{if } |z| \leq \nh/2 - \ch\sqrt{\nh} \ , \\
    h(t, z) & \text{otherwise.}
  \end{cases}$$

  Note that $|R| = c\sqrt{m}$ for some
  constant $c$ depending on $\ch$.  Recall that a $(1/4)$-fraction of
  edges in the $i$-th direction are violated when $i \in x_t \cap y_t$
  and the index equals $t$, and therefore we need to change at least
  $(1/8)$ of $h(t,z)$ to get a monotone function.  By our choice
  of $\ch$, $h(t,z) \neq h'(t ,z)$ with probability at most
  $1/16$ (over a random $z$ and any fixed $t$).

  It follows that for each $t$ where $x_t$ and $y_t$ intersect, we
  need to change $h'(t,z)$ for at least a $(1/16)$-fraction of $z$ to
  get a monotone function.  Overall, the distance to monotonicity is
  at least $1/(16\ell) = \eps/2$.  Rescaling $\eps$ completes the proof.
\end{proof}

\begin{proof}[Proof of Claim~\ref{claim:part3}]
  We use a claim from~\cite{BBM:12}.

  \begin{claim}[\cite{BBM:12}, Claim 4.2]\label{claim:BBM}
    If there is a $q$-query algorithm that tests
    $f:\bits{n}\rightarrow R$ for monotonicity when $|R| =
    O(\sqrt{n})$ then there is a $q$-query algorithm that tests
    $g:\bits{m}\rightarrow R$ for monotonicity where $m =
    \Omega(|R|^2)$.
  \end{claim}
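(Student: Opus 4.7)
The plan is to reduce the $m$-dimensional monotonicity testing problem to the $n$-dimensional one by padding the input with ``inert'' coordinates. Given a function $g: \bits{m} \to R$ and the hypothesized $q$-query tester $T$ for functions $f: \bits{n}\to R$ with $|R| = O(\sqrt{n})$, I would first choose $n = \Theta(m)$ (this satisfies $|R| = O(\sqrt{n})$ since $m = \Omega(|R|^2)$), and define the extension $\tilde g: \bits{n} \to R$ by
$$\tilde g(x_1,\ldots, x_m, y_1, \ldots, y_{n-m}) \deq g(x_1,\ldots, x_m),$$
so that $\tilde g$ ignores the last $n-m$ coordinates. The tester for $g$ then simulates $T$ on $\tilde g$, answering each oracle query $\tilde g(x,y)$ by a single oracle query $g(x)$; this uses exactly $q$ queries, so query complexity is preserved on the nose.

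Correctness splits into completeness and soundness. For completeness, if $g$ is monotone on $\bits{m}$ then $\tilde g$ is monotone on $\bits{n}$: monotonicity in directions $i \le m$ follows from monotonicity of $g$, and in directions $i > m$ the function $\tilde g$ is constant along that axis. For soundness, I would show that the distance from $\tilde g$ to the set of monotone functions on $\bits{n}$ equals the distance from $g$ to the set of monotone functions on $\bits{m}$. The easy direction: any monotone $h: \bits{m}\to R$ at distance $\delta$ from $g$ lifts to a monotone $\tilde h(x,y) \deq h(x)$ at the same distance from $\tilde g$. The other direction is the key step: starting from any monotone $H : \bits{n}\to R$ with $\Pr_{(x,y)}[\tilde g(x,y) \neq H(x,y)] \le \delta$, an averaging argument over $y$ produces some $y^\star \in \bits{n-m}$ with $\Pr_x[g(x) \neq H(x, y^\star)] \le \delta$, and the restriction $h(x) \deq H(x, y^\star)$ is monotone on $\bits{m}$ because restricting a monotone function to an axis-aligned slice preserves monotonicity. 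Combining the two halves gives exact preservation of distance-to-monotone, so $T$'s verdict on $\tilde g$ at parameter $\eps$ transfers verbatim to $g$ at parameter $\eps$.

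The argument is essentially routine; the one point that requires care is choosing the target dimension $n$ so that the range constraint $|R| = O(\sqrt{n})$ demanded by $T$ is compatible with the conclusion's constraint $m = \Omega(|R|^2)$. Setting $n = \Theta(m)$ (or more generally any $n \geq m$ satisfying $|R| = O(\sqrt{n})$) makes both constraints hold simultaneously, so the $q$-query guarantee at dimension $n$ yields a $q$-query tester for $g$ at dimension $m$ as claimed.
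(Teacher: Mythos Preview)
The paper does not supply its own proof of this claim; it is quoted verbatim from \cite{BBM:12} and used as a black box inside the proof of Claim~\ref{claim:part3}. So there is no in-paper argument to compare against. Your padding argument---extend $g:\bits{m}\to R$ to $\tilde g:\bits{n}\to R$ by ignoring the extra $n-m$ coordinates, simulate the $n$-dimensional tester, and verify via the averaging/restriction step that distance-to-monotone is preserved exactly---is the standard proof of this statement and is correct.

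One quantifier slip worth tightening: you write ``choose $n=\Theta(m)$,'' but in the claim $n$ and $R$ (with $|R|=O(\sqrt{n})$) are the data supplied by the hypothesis, and it is $m$ that you get to pick in the conclusion. The right framing is: given the $n$-dimensional tester with $|R|=O(\sqrt{n})$, set $m=\Theta(|R|^2)\le n$ (which is possible precisely because $|R|^2=O(n)$), then pad any $g:\bits{m}\to R$ up to dimension $n$. Your later parenthetical ``or more generally any $n\ge m$ satisfying $|R|=O(\sqrt{n})$'' shows you have the inequality $m\le n$ in mind, so this is only a matter of presentation, not substance.
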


  Claim~\ref{claim:part2} shows that testing $g$ requires
  $\Omega(m/\eps)$ queries.  Thus, testing $f$ requires
  $\Omega(m/\eps) = \Omega(|R|^2/\eps)$ queries.
\end{proof}

\section{Testing Fourier Degree}\label{sec:fourierdegree}

In this section we present our lower bound for testing
whether a given function has low Fourier degree. For
convenience, in the context of Fourier analysis we consider the
Boolean function to be of the form $f: \{-1,1\}^n\rightarrow
\{-1,1\}$.

Diakonikolas \textit{et al.}~\cite{DLM+:07} considered the problem of
testing whether a Boolean function $f$ has Fourier degree at most
$k$. They proved a general lower bound of $\tilde{\Omega}(\log k)$,
and a lower bound of $\tilde{\Omega}(\sqrt{k})$ for the non-adaptive
tester with any $\epsilon\leq 1/2$. They also present an algorithm
with $\tilde{O}(2^{6k}/\epsilon^2)$ query complexity for this
problem. Chakraborty \textit{et al.}~\cite{CGM11} and later Blais
\textit{et al.}~\cite{BBM:12} improved the lower bound to
$\Omega(\min\{k, n-k\})$, for any $0\leq \epsilon\leq 1/2$. In this
section we show how to use the communication complexity technique to
prove a lower bound of $\Omega(\min\{k,n-k\}/\sqrt{\epsilon})$ on
testing whether a Boolean function is of Fourier degree at most $k$.

\begin{theorem}\label{thm:fourier}
  Let $\epsilon \geq 2^{-k-1}$. Testing whether a Boolean function
  $f:\{-1,1\}^n\rightarrow \{-1,1\}$ has Fourier degree $\leq k$ or
  $\epsilon$-far from any Boolean function with Fourier degree $\leq
  k+1$, requires $\Omega(\frac{\min\{k,n-k\}}{\sqrt{\epsilon}})$
  queries.
\end{theorem}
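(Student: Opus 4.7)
The plan is to apply the reduction framework of Lemma~\ref{lem:mrl}, reducing from the direct-sum disjointness problem $\orkdisj_m^\ell$ with $\ell = \Theta(1/\sqrt{\epsilon})$ and appropriately chosen parameters $k'$ and $m$. By Lemma~\ref{lem:ordisj}, the randomized communication complexity of this problem is $\Omega(\ell\cdot\min\{k', m-2k'\})$, which after tuning gives $\Omega(\min\{k,n-k\}/\sqrt{\epsilon})$; Lemma~\ref{lem:mrl} then transfers this to a query lower bound on $Q(\fourier_k)$.

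Given $\orkdisj$ inputs $(x_1,\ldots,x_\ell)$ and $(y_1,\ldots,y_\ell)$, Alice and Bob would each construct a Boolean function on $\pmbits{n}=\pmbits{\log\ell+m}$ via character encodings: the natural first attempt is $f(t,z)=\chi_{x_t}(z)$ and $g(t,z)=\chi_{y_t}(z)$, viewing the first $\log\ell$ bits of the input as an index $t$ selecting one of the $\ell$ coordinate pairs. A combining operator $\psi$ (to be designed) then produces a Boolean $h=\psi(f,g)$ whose Fourier structure tracks the intersection pattern of the $\orkdisj$ instance: in the all-disjoint case, $h$ lies in $\fourier_k$; in the intersecting case, $h$ carries substantial Fourier mass on coefficients of degree $>k+1$.

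The central Fourier-analytic step is the $\epsilon$-far case. I would compute the Fourier expansion of $h$ by decomposing along the block structure in $t$, producing coefficients of the form $\hat{h}(U\cup T)=\frac{1}{\ell}\sum_{t:\,x_t\triangle y_t=T}\chi_U(t)$ for $U\subseteq[\log\ell]$ and $T\subseteq[m]$, and then apply Parseval in the $t$- and $z$-coordinates separately to show that a single intersecting pair contributes Fourier weight at least $1/\ell=\Omega(\sqrt{\epsilon})$ to degrees exceeding $k+1$. Since $h$ and any Boolean $g$ of Fourier degree $\leq k+1$ satisfy $4\Pr[h\neq g]=\|h-g\|_2^2\geq \sum_{|S|>k+1}\hat{h}(S)^2$, this gives $\Pr[h\neq g]\geq\Omega(\sqrt{\epsilon})\geq\epsilon$ as long as $\epsilon$ is below a small absolute constant, establishing $\epsilon$-farness.

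The hard part will be engineering $\psi$ and the parameters $(k',m)$ so that $h$ is simultaneously Boolean, cleanly lies in $\fourier_k$ in the all-disjoint case, and carries the claimed amount of high-degree Fourier weight in the intersecting case. The naive product $\psi(f,g)=f\cdot g$ gives $h=\chi_{x_t\triangle y_t}(z)$, whose Fourier degree is polluted by the $\log\ell$ index coordinates and whose bad-case Fourier mass does not obviously land above the threshold $k+1$; resolving this requires a more careful combining operator and accounting that exploits the $\kdisj$ promise $|x_t\cap y_t|\leq 1$, together with tighter Parseval bookkeeping on the $(U,T)$-decomposition. This is presumably the ``heavier machinery'' the authors allude to in the introduction, and working it out constitutes the main technical content of the proof.
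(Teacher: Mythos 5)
Your scaffolding matches the paper's proof: reduce from $\ell$-fold OR-disjointness with $\ell \approx 1/\sqrt{\epsilon}$, encode inputs via index-selector characters $f(t,z)=\chi_{x_t}(z)$, $g(t,z)=\chi_{y_t}(z)$, and convert high-degree Fourier mass of the combined function $h$ into distance via Parseval. But the sketch leaves the load-bearing step as a TODO, and that step is where the actual content lies. The missing combining operator is $\psi(f,g)=f\cdot g\cdot\chi_{[n]\setminus[\log\ell]}$, which replaces the naive product's $\chi_{x_t\mathbin{\Delta} y_t}(z)$ by $\chi_{[m]\setminus(x_t\mathbin{\Delta} y_t)}(z)$ with $m = n-\log\ell$: complementing the symmetric difference inside the $z$-block inverts the monotonicity, so that a disjoint pair (large symmetric difference) gives a \emph{small} set. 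Reducing from $d$-disjointness with $d=(n-k)/2$ then makes $|D(t)| = m - |x_t\mathbin{\Delta} y_t|$ equal $k-\log\ell$ on disjoint pairs and $k+2-\log\ell$ on intersecting ones; adding back the $\log\ell$ index bits via Propositions~\ref{prop:degk} and~\ref{prop:fardegk} yields Fourier degree $\leq k$ in the yes case and a lone surviving coefficient at degree exactly $k+2$ in the no case, cleanly straddling the threshold $k+1$. Your observation that the naive $f\cdot g$ fails is correct, but without exhibiting this $\psi$ the reduction does not exist.

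Second, your intermediate Fourier-weight estimate is a square root too strong. You claim a single intersecting pair contributes weight $1/\ell = \Omega(\sqrt\epsilon)$ to degrees exceeding $k+1$. The total weight on coefficients $U\cup T$, summed over $U\subseteq[\log\ell]$ with $T$ the bad set of size $k+2-\log\ell$, is indeed $\sum_U (1/\ell)^2 = 1/\ell$, but only $U=[\log\ell]$ pushes $|U\cup T|$ above $k+1$; every proper $U$ lands at degree $\leq k+1$ and is therefore cancellable by a degree-$(k+1)$ competitor $g$. So the weight that actually survives above $k+1$ is a single squared coefficient, $(1/\ell)^2 = \Theta(\epsilon)$, which is exactly what Proposition~\ref{prop:fardegk} gives (compare Proposition~\ref{prop:fardeg2k}, which sums over all $U$ but only certifies farness from degree $\leq m-1$, a weaker conclusion). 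The end result $\Pr[h\neq g]\geq\epsilon$ still holds once $\ell$ is chosen so that $(1/\ell)^2$ safely dominates $\epsilon$, but the $\Omega(\sqrt\epsilon)$ bound you wrote is not achievable by this construction.
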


\begin{remark}
  Note that the case when $\epsilon \leq 2^{-k-1}$ is at least as hard
  as when $\epsilon=2^{-k-1}$ and thus an exponential lower bound of
  $\Omega(2^k)$ for such $\epsilon$ follows immediately from the above
  theorem.
\end{remark}

In order to prove our lower bound we first need to construct Boolean
functions with certain properties.

\subsection{{\bf Our Constructions of Functions}}\label{sec:fourier}
In this section we give a method how to construct functions which are
of Fourier degree $\leq k$ and functions that are far from having
Fourier degree at most $k$.

\begin{definition}[Functions Defined by Index Selectors]\label{def:fun-idx-sel}
Let $\l>0$ be an integer. Call a map
$$ \mathcal{C}^\l: \{-1,1\}^\l \rightarrow \cP([n]\backslash
\{1,\ldots, \l\})
$$ to be an {\em index selector}, where $\cP(\cdot)$ denotes the power
set. Given an index selector $\mathcal{C}^\l$ one can define a Boolean
function $f^\l:\{-1,1\}^n\rightarrow \{-1,1\}$ as following
$$
f^\l(x_1,...,x_n) = \chi_{\cC(x_1,\ldots,x_\l)}(x_1,...,x_n),
$$
where $\chi_A(x_1,...,x_n):= \prod_{i\in A} x_i$ for $A\subseteq [n]$.
We call $f^\l$ an \emph{index selector function}.
\end{definition}

In the next three propositions we show how the cardinalities of the
sets $\cC^\l(a_1,...,a_\l)$ can lead $f^\l$ to be of low Fourier
degree, or to be far from any Boolean function with low Fourier
degree. For the sake of simplicity we will often avoid the superscript
$\l$ in $\cC^\l(a_1,\ldots, a_\l)$ and simply write
$\cC(a_1,\ldots,a_\l)$.

\begin{prop}\label{prop:degk}
  The Boolean function $f^{\l}:\{-1,1\}^n\rightarrow \{-1,1\}$ as
  described above, is of Fourier degree $m+\l$ if
  $$
  \forall (a_1,...,a_{\l})\in \{-1,1\}^\l, 
  |\cC^{\l}(a_1,...,a_{\l})|\leq m.
  $$
\end{prop}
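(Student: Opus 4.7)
The plan is to exploit the fact that once the first $\ell$ variables $x_1,\ldots,x_\ell$ are fixed to some assignment $a=(a_1,\ldots,a_\ell) \in \{-1,1\}^\ell$, the function $f^\ell$ collapses to a single character $\chi_{\mathcal{C}(a)}$ on the remaining variables $x_{\ell+1},\ldots,x_n$, which by hypothesis has degree at most $m$. So I would decompose $f^\ell$ as a sum over all $2^\ell$ settings of $(x_1,\ldots,x_\ell)$, each summand being an ``indicator times a low-degree character'' gadget, and then bound the degree of each summand separately.

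Concretely, I would use the standard $\pm 1$-valued multilinear indicator $\mathbf{1}[x_i = a_i] = \frac{1 + a_i x_i}{2}$ and write
\[
f^\ell(x_1,\ldots,x_n) \;=\; \sum_{a \in \{-1,1\}^\ell} \left(\prod_{i=1}^{\ell} \frac{1+a_i x_i}{2}\right) \cdot \chi_{\mathcal{C}(a)}(x_{\ell+1},\ldots,x_n).
\]
The identity holds pointwise, because for each input $x$ exactly one indicator is nonzero (equal to $1$) and it selects the correct assignment $a = (x_1,\ldots,x_\ell)$, recovering the definition of $f^\ell$.

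Next I would read off the Fourier degree of each summand. The indicator $\prod_{i=1}^\ell \frac{1+a_i x_i}{2}$ is already expressed as a multilinear polynomial in $x_1,\ldots,x_\ell$ of degree at most $\ell$, so its Fourier support consists of subsets of $\{1,\ldots,\ell\}$. The character $\chi_{\mathcal{C}(a)}$ is supported on the subset $\mathcal{C}(a) \subseteq \{\ell+1,\ldots,n\}$ of size at most $m$. Because the two factors involve disjoint sets of variables, multiplying them yields a multilinear polynomial whose monomials correspond to subsets of size at most $\ell + m$. Summing over $a$ preserves this degree bound, so $f^\ell$ has Fourier degree at most $m + \ell$, as claimed.

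There is no real obstacle here; the only thing to be careful about is the disjointness between the coordinates used by the indicator and those used by the character, which is guaranteed by the codomain $\mathcal{P}([n]\setminus\{1,\ldots,\ell\})$ of the index selector. This is precisely the reason the definition excludes the first $\ell$ coordinates from $\mathcal{C}^\ell$.
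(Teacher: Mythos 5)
Your proof is correct, but the route differs from the paper's. The paper proceeds by direct Fourier analysis: it writes out $\widehat{f^\ell}(S)=2^{-n}\sum_x \chi_{\cC(x_{[\l]})}(x)\chi_S(x)$, splits the sum over the first $\l$ coordinates versus the remaining $n-\l$, and argues that for any $S$ with $|S|\ge m+\l+1$ the inner sum over $x_{\l+1},\dots,x_n$ vanishes, because $S$ must contain an index outside $\cC(x_{[\l]})\cup[\l]$. You instead take the synthesis viewpoint: you exhibit an explicit multilinear representation
\[
f^\ell(x)=\sum_{a\in\pmbits{\l}}\Bigl(\prod_{i=1}^{\l}\tfrac{1+a_i x_i}{2}\Bigr)\chi_{\cC(a)}(x),
\]
in which every summand is a product of a degree-$\le\l$ polynomial on $x_1,\dots,x_\l$ and a degree-$\le m$ character on the disjoint variables $x_{\l+1},\dots,x_n$, so every monomial has degree at most $m+\l$, and by uniqueness of the multilinear expansion this bounds the Fourier degree. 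The underlying insight (conditioning on the first $\l$ coordinates collapses $f^\l$ to a single low-degree character) is shared, but your argument is slightly more elementary in that it avoids computing $\widehat{f^\l}(S)$ and instead reads the degree off a manifestly low-degree representation, whereas the paper's coefficient computation also sets up the notation reused verbatim in the proofs of Propositions~\ref{prop:fardeg2k} and~\ref{prop:fardegk}. One cosmetic point common to both: the statement asserts degree \emph{equal to} $m+\l$, but both proofs establish only the upper bound $\le m+\l$; that is all that is needed downstream.
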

\begin{proof}
  We have to prove that $\langle f, \chi_S \rangle = 0$ for any
  $S\subseteq [n]$ with $|S|\geq m+\l+1$.
\begin{align*}
\widehat{f^\l}(S)&= \langle f^\l,\chi_S \rangle= 2^{-n} \sum_{x\in \{-1,1\}^n} \chi_{\cC(x_{[\l]})}(x) \cdot \chi_S(x) \\ &= 2^{-n} \sum_{x_1,...,x_{\l}} \sum_{x_{\l+1}, ..., x_n} \chi_{\cC(x_1,...,x_{\l})}(x_1,...,x_n) \cdot \chi_S(x_1,...,x_n)=0.
\end{align*}
The last equality follows from the fact that $$\sum_{x_{\l+1}, ...,
  x_n \in \{-1,1\}} \chi_{\cC(x_1,...,x_{\l})}(x_1,...,x_n)\cdot
\chi_S(x_1,...,x_n)=0,$$ since $$\exists i \in S :i\notin
\cC(x_1,...,x_{\l}) \cup \{1,...,\l\},$$ because $|S| \geq m+\l+1$.
\end{proof}

\begin{prop}\label{prop:fardeg2k}
The Boolean function $f^\l$ is $1/{2^{\l+1}}$-far from any Boolean
function of Fourier degree $\leq m-1$ if for only one
$(b_1,...,b_\l)\in \{-1,1\}^\l$, $|\cC(b_1,...,b_\l)|\geq m$, and
$$\forall (a_1,...,a_\l)\neq (b_1,...,b_\l): \; |\cC(a_1,...,a_\l)|\leq m-1.$$
\end{prop}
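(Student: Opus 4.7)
The plan is to localize the farness to a single slice of the hypercube. Let $b \deq (b_1,\ldots,b_\l) \in \{-1,1\}^\l$ be the unique prefix with $|\cC(b)| \geq m$, and set $T \deq \cC(b) \subseteq \{\l+1,\ldots,n\}$, so $|T|\geq m$. Consider the slice $S_b \deq \{x\in\{-1,1\}^n : (x_1,\ldots,x_\l) = b\}$, which has measure $2^{-\l}$. On this slice, viewing $f^\l$ as a function of the free coordinates $y \deq (x_{\l+1},\ldots,x_n)$, we have $f^\l(b,y) = \chi_T(y)$, a character of Fourier degree $|T|\geq m$.

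Now let $g:\{-1,1\}^n \rightarrow \{-1,1\}$ be an arbitrary Boolean function of Fourier degree at most $m-1$. First I would record the elementary fact that the restriction $g_b(y) \deq g(b,y)$ is itself a $\pm 1$-valued function of Fourier degree at most $m-1$: expanding $g = \sum_{S:\,|S|\leq m-1}\widehat{g}(S)\chi_S$ and splitting each $S$ as $S_1 \cup S_2$ with $S_1 \subseteq [\l]$ and $S_2 \subseteq \{\l+1,\ldots,n\}$, the characters surviving after fixing $x_{[\l]}=b$ are indexed by the sets $S_2$, all of size at most $m-1$. In particular $\widehat{g_b}(T) = 0$, since $|T| \geq m$.

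Character orthogonality then pins the disagreement on $S_b$ to exactly one-half: because both $\chi_T$ and $g_b$ are $\pm 1$-valued on $\{-1,1\}^{n-\l}$,
\begin{equation*}
0 \;=\; \widehat{g_b}(T) \;=\; \Ex_y[\chi_T(y)\,g_b(y)] \;=\; 1 - 2\Pr_y[\chi_T(y)\neq g_b(y)],
\end{equation*}
so $\Pr_y[\chi_T(y)\neq g_b(y)] = 1/2$. Hence $\Pr_x[f^\l(x)\neq g(x)] \geq 2^{-\l}\cdot \tfrac{1}{2} = 2^{-(\l+1)}$, giving the claimed distance. The only step requiring any care is the ``restriction does not increase Fourier degree'' lemma; the remainder is immediate from character orthogonality. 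As an aside, the second hypothesis — that $|\cC(a)|\leq m-1$ for every $a\neq b$ — is not actually needed for this bound, since the single heavy slice alone already supplies the full $2^{-(\l+1)}$ gap; that hypothesis presumably plays its role when this proposition is paired with Proposition~\ref{prop:degk} inside the communication reduction.
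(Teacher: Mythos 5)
Your proof is correct, and it takes a genuinely different route from the paper's. The paper computes the exact Fourier coefficients of $f^\l$ at all $2^\l$ sets of the form $U\cup\cC(b_1,\ldots,b_\l)$ with $U\subseteq[\l]$ (the second hypothesis is precisely what kills the cross terms from the other slices in that calculation), then applies Parseval and the identity $\mathrm{dist}(f,g) = \tfrac{1}{2}\|f-g\|_2^2$ for $\pm 1$-valued functions. You instead localize to the heavy slice $S_b$, observe that restriction cannot increase Fourier degree so $\widehat{g_b}(T)=0$, and use orthogonality to conclude $g_b$ disagrees with $\chi_T$ on exactly half of $S_b$. Your argument is shorter, avoids any global Fourier computation, and — as you correctly point out — does not use the second hypothesis at all, so it actually establishes a slightly stronger statement (it suffices that \emph{some} $b$ has $|\cC(b)|\geq m$). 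One trade-off worth noting: the paper's explicit computation of $\widehat{f^\l}$ is reused in Proposition~\ref{prop:fardegk}, where $g$ is allowed Fourier degree up to $m+\l-1$. There your restriction argument breaks down, since $g_b$ could then have degree as large as $m+\l-1 \geq |T|$, so $\widehat{g_b}(T)$ need not vanish; the paper instead isolates the single coefficient $\widehat{f^\l}([\l]\cup\cC(b)) = \pm 2^{-\l}$, which sits at degree $\geq m+\l$ and still survives. So the paper's heavier machinery is doing double duty, even though for this proposition your localization is cleaner.
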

\begin{proof}
 We first prove that for any $U\subseteq \{1,...,\l\}$, the Fourier
 coefficient of $|f^\l|$ at $S:=U\cup \cC(b_1,...,b_\l)$ is equal to
 $1/2^\l$.

\begin{align*}
\widehat{f}^\l(S) &= \langle f^\l, \chi_{U\cup \cC(b_1,...,b_\l}) \rangle = 
2^{-n}\sum_{x\in \{-1,1\}^n} \chi_{x_{[\l]}}(x)\cdot \chi_{U\cup \cC(b_1,...,b_\l)}(x)\\&=
2^{-n} \sum_{x_{[\l]} \in \{-1,1\}^\l} \left(\prod_{i\in U} x_i\right ) \sum_{x_{l+1},...,x_{n}} \chi_{x_{[\l]}}(x)\chi_{\cC(b_1,...,b_\l)}(x)\\&=
2^{-\l} \prod_{i\in U} b_i + 2^{-n}\sum_{x_{[\l]}\neq (b_1,...,b_n)} \left(\prod_{i\in U} x_i\right ) \sum_{x_{\l+1},...,x_{n}} \chi_{x_{[\l]}}(x)\cdot \chi_{\cC(b_1,...,b_\l)}(x) \\&= 
2^{-l} \prod_{i \in U} b_i.
\end{align*}
The last equality follows from the fact that if $(a_1,...,a_\l)\neq
(b_1,...,b_\l)$ then $|\cC(b_1,...,b_\l)|> |\cC(a_1,...,a_\l)|$,
and $$ \sum_{x_{\l+1},...,x_{n}} \chi_{x_{[\l]}}(x)\cdot
\chi_{\cC(b_1,...,b_\l)}(x)=0.
$$

Let $g:\{-1,1\}^n\rightarrow \{-1,1\}$ be a Boolean function with
Fourier degree at most $m-1$, namely we can write
$$
g(x) = \sum_{S\subseteq [n]}^{|S|\leq m-1} \widehat{g}(S) \chi_S(x).
$$ Notice that the distance between the two functions $f$ and $g$ with
range $\{-1,1\}$ can be formulated as $\frac{1}{2}||f-g||^2_2 =
\frac{1}{2}\Ex[(f-g)^2]$. Finally Parseval Identity implies that
\begin{align*}
||f-g||^2_2 &= 
\sum_{S\subseteq [n]}^{|S|\leq m-1} (\widehat{f}(S)-\widehat{g}(S))^2 + \sum_{S\subseteq [n]}^{|S|\geq m} \widehat{f}(S)^2 \\ &\geq 
\sum_{S\subseteq [n]}^{|S|\geq m} \widehat{f}(S)^2 \geq
\sum_{U\subseteq [\l]} \left(2^{-\l} \prod_{i\in U} b_i \right)^2 = 2^{-\l}.
\end{align*}
\end{proof}

\begin{prop}\label{prop:fardegk}
The Boolean function $f^\l$ is $1/{2^{2\l+1}}$-far from any Boolean
function of Fourier degree $\leq m+\l-1$ if for only one
$(b_1,...,b_\l)\in \{-1,1\}^\l$, $|\cC(b_1,...,b_\l)|\geq m$, and
$$\forall (a_1,...,a_\l)\neq (b_1,...,b_\l): \; |\cC(a_1,...,a_\l)|\leq m-1.$$
\end{prop}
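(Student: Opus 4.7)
My plan is to follow the same blueprint as Proposition~\ref{prop:fardeg2k}, reusing its Fourier computation almost verbatim and only adjusting which coefficients survive the truncation to high-degree sets. The key identity established in that earlier proof is that for every $U \subseteq [\l]$,
$$ \widehat{f^{\l}}\bigl(U \cup \cC(b_1,\ldots,b_\l)\bigr) = 2^{-\l} \prod_{i \in U} b_i, $$
so each such coefficient has magnitude exactly $2^{-\l}$. Since $U \subseteq [\l]$ and $\cC(b_1,\ldots,b_\l) \subseteq [n]\setminus[\l]$ are disjoint, the union has cardinality $|U| + |\cC(b_1,\ldots,b_\l)|$. I would begin the proof by restating this identity and its derivation, since nothing in that calculation used the specific target degree.

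The new requirement is that $f^{\l}$ be far from any Boolean function of Fourier degree $\leq m+\l-1$, so only Fourier coefficients on sets of size $\geq m+\l$ can contribute to the Parseval lower bound. Because $|\cC(b_1,\ldots,b_\l)| \geq m$, exactly one choice of $U$ still works: $U=[\l]$ gives $S := [\l] \cup \cC(b_1,\ldots,b_\l)$ with $|S| \geq m+\l$ and $\widehat{f^{\l}}(S)^2 \geq 2^{-2\l}$. Every strict subset $U \subsetneq [\l]$ now produces a Fourier set of size at most $m+\l-1$ and so contributes nothing once we pass to $\sum_{|S|\geq m+\l}$. Combining this with Parseval and the fact that $\widehat{g}(S)=0$ for $|S|\geq m+\l$ yields
$$ \|f^{\l} - g\|_2^2 \;\geq\; \sum_{|S|\geq m+\l} \widehat{f^{\l}}(S)^2 \;\geq\; 2^{-2\l}, $$
after which invoking the same $\tfrac{1}{2}\|\cdot\|_2^2$ distance formulation used in Proposition~\ref{prop:fardeg2k} produces the claimed bound $1/2^{2\l+1}$.

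I do not anticipate any serious obstacle; the argument is essentially a re-accounting of which terms in the earlier sum still lie above the raised degree threshold. The single conceptual point is that moving the target from $m-1$ up to $m+\l-1$ destroys $2^\l-1$ of the $2^\l$ contributing Fourier coefficients and leaves only the top-level one intact, which is exactly why the distance bound degrades by a factor of $2^\l$ relative to Proposition~\ref{prop:fardeg2k}.
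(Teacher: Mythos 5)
Your proposal is correct and takes exactly the same route as the paper: reuse the computation from Proposition~\ref{prop:fardeg2k} giving $\widehat{f^{\l}}\bigl(U \cup \cC(b_1,\ldots,b_\l)\bigr) = 2^{-\l}\prod_{i\in U}b_i$, note that for $U=[\l]$ the corresponding set has size at least $m+\l$, and apply Parseval keeping only that one coefficient. One small inaccuracy worth flagging, though it does not affect the result: your claim that every strict subset $U\subsetneq[\l]$ yields a set of size at most $m+\l-1$ is only guaranteed when $\wt{\cC(b_1,\ldots,b_\l)}=m$ exactly; if $\wt{\cC(b_1,\ldots,b_\l)}>m$, some proper subsets $U$ still clear the degree threshold. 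Since you only need a lower bound, dropping those potential extra nonnegative terms is harmless, so the final inequality $\sum_{\wt{S}\geq m+\l}\widehat{f^{\l}}(S)^2 \geq 2^{-2\l}$ and the resulting distance bound $2^{-2\l-1}$ remain valid.
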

\begin{proof}
The proof is similar to the proof of Proposition~\ref{prop:fardeg2k},
with the difference that we only use the fact that
$\widehat{f^\l}([\l]\cup \cC(b_1,...,b_\l)) = 2^{-\l}$, and thus
$f^\l$ is $2^{-2\l-1}$ far from any function with fourier degree
$m+\l-1$.
\end{proof}
\subsection{{\bf Proof of Theorem~\ref{thm:fourier}}}
In this section, we show how to use the communication complexity
technique to prove a lower bound of $\Omega(\frac{\min\{k,
  n-k\}}{\sqrt{\epsilon}})$ on testing whether a Boolean function is
of Fourier degree at most $k$.
\begin{proof}[Proof of Theorem~\ref{thm:fourier}]
  Let $\l$ be the largest integer such that $\epsilon< 2^{-2\l-1}$.
  Notice that $\epsilon \geq 2^{-k-1}$ implies $\l\leq \frac{k}{2}$.
  Also, let $m \deq n - \l$.  Assume that $n-k$ is even, and let $d
  \deq (n-k)/2$.
  We prove that \mbox{$\Omega(2^{\l}\cdot\min\{k,m-k\})$} queries are
  required to test whether a Boolean function has Fourier
  degree $\leq k$ or is $\epsilon$-far from any Boolean function with
  degree $\leq k+1$ by reducing from $\orddisj_m^{2^\ell}$.

  Let $\psi$ be the combining operator that, given functions $f,g :
  \pmbits{n} \rightarrow \pmbit$, returns the function $h$ defiend as
  $h \deq f \cdot g \cdot \chi_{[n]\setminus [\ell]}$. Define
  $\CCfourier$ to be the communication game where Alice and Bob are
  given functions $f$ and $g$ respectively and must test whether $h$
  has Fourier degree at most $k$ or is $\eps$-far from all functions
  of Fourier degree at most $k+1$.
  By Lemma~\ref{lem:mrl} and Lemma~\ref{lem:ordisj}, we have $$
  2\,Q(\fourier_k) \geq R(\CCfourier) \quad \mbox{and} \quad
  R(\orddisj_{m}^{2^\ell}) = \Omega(2^\ell\min\{d, m-d\}) =
  \Omega\left(\frac{\min\{k,n-k\}}{\sqrt{\eps}}\right)\ .$$ We complete the proof by
  showing that $R(\CCfourier) \geq R(\orddisj_{m}^{2^\ell})$.

  Given inputs $x = (x_1,\ldots, x_{2^\l})$ and $y = (y_1,\ldots,
  y_{2^\l})$ to $\orddisj_m^{2^\l}$, Alice and Bob create functions
  $f,g$ by using \emph{index selectors}.  Specifically, Alice defines
  the index selector $C_x$ in the natural way---for any $t \in
  \pmbits{\l}$, set $C_x(t) \deq x_t$.  Bob similarly builds an index
  selector $C_y$ from $y$.  Let $f_x$ and $g_y$ be the functions
  defined by $C_x$ and $C_y$, as described in
  Definition~\ref{def:fun-idx-sel}.  We claim that the combined
  function $h$ is also an index selector function.  To see this, fix
  $a \in \pmbits{\l}$, and notice that $h(t,z) = \chi_{x_t}(z)\cdot
  \chi_{y_t}(z) \cdot \chi_{[m]}(z) = \chi_T(z)$, where
  $T = [m]\setminus(x_t \mathbin{\Delta} y_t)$.  Thus, $h$ describes
  an index selector function for the index selector $D_{xy}$ defined
  by $D_{xy}(t) \deq [m]\setminus(x_t \mathbin{\Delta} y_t)$.

  The index selector $D_{xy}$ is highly structured.  In
  particular, 
  $$|D_{xy}(t)| = 
  \begin{cases}
    k-\l & \text{if } x_t \cap y_t = \emptyset \\ k+2-\l & \text{if }
    |x_t \cap y_t| = 1
  \end{cases}$$

  By Proposition~\ref{prop:degk} and Proposition~\ref{prop:fardegk},
  we have that $h$ has Fourier degree $k$ if $x_t \cap y_t =
  \emptyset$ for all $t$, and that $h$ is $\eps$-far from any Boolean
  function of degree $k+1$ when $x_t$,$y_t$ intersect for a unique
  $t$.  Thus, an answer to $\CCfourier$ gives an answer to
  $\orddisj_m^{2^\ell}$.
\end{proof}

\subsection{{\bf Approximate Fourier degree testing} }
Chakraborty \textit{et al.}~\cite{CGM11a} proved that testing whether a Boolean function has Fourier degree at most $k$ or it is far from any Boolean function with Fourier degree $n-\Theta(1)$ requires $\Omega(k)$ queries. Here we prove an $\Omega(1/\epsilon)$ lower bound for the non-adaptive tester, using Yao's minimax principle. For this we introduce two distributions $D_+$ and $D_-$ on Boolean functions where $D_+$ is a distribution supported only on Boolean functions with Fourier degree $\leq k$ and $D_-$ is only supported on Boolean functions that are $\epsilon$-far from any Boolean function with Fourier degree $\leq n-2k$. This combined with Chakraborty~\textit{et al.}'s result gives an $\Omega(k+\frac{1}{\epsilon})$ lower bound for non-adaptively approximate testing the Fourier degree.

\begin{thm}\label{thm:main1}
Let $\epsilon\geq 2^{-(k/2-1)}$. Non-adaptively Testing whether a Boolean function $f:\{-1,1\}^n\rightarrow \{-1,1\}$ has Fourier degree $\leq k$ or it is $\epsilon$-far from any Boolean function with Fourier degree $\leq n-k$ requires $\Omega(\frac{1}{\epsilon})$ queries. 
\end{thm}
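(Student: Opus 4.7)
The plan is to apply Yao's minimax principle to non-adaptive deterministic testers, reusing the index selector construction from Section~\ref{sec:fourier}. I would let $\ell$ be the largest integer with $2^{-(\ell+1)} \geq \epsilon$; the hypothesis $\epsilon \geq 2^{-(k/2-1)}$ ensures $\ell \leq k-1$, which provides the room needed to pick a fixed set $S_1 \subseteq \{\ell+1,\dots,n\}$ with $|S_1| = n-k+1$. Take $D_+$ to be the point mass on the constant-$1$ function, which has Fourier degree $0 \leq k$ and thus forms a valid YES instance. For $D_-$, pick $b^* \in \{-1,1\}^\ell$ uniformly at random and output the index selector function (in the sense of Definition~\ref{def:fun-idx-sel}) associated with $\mathcal{C}^{\ell}(b^*) = S_1$ and $\mathcal{C}^{\ell}(a) = \emptyset$ for $a \neq b^*$; explicitly, $f_-(x) = \chi_{S_1}(x)$ when $(x_1,\dots,x_\ell) = b^*$ and $f_-(x) = 1$ otherwise.

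The first key step is to verify that every $f_-$ in the support of $D_-$ is $\epsilon$-far from every Boolean function of Fourier degree at most $n-k$. The index selector satisfies the hypothesis of Proposition~\ref{prop:fardeg2k} with $m = n-k+1$: exactly one prefix $b^*$ gives $|\mathcal{C}(b^*)| = m$, and all others give $|\mathcal{C}(a)| = 0 \leq m-1$. The proposition then yields a farness bound of $2^{-(\ell+1)}$ to any Boolean function of Fourier degree at most $m - 1 = n - k$, which is at least $\epsilon$ by the choice of $\ell$.

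The second step is a standard indistinguishability argument. Fix any non-adaptive query set $x_1,\dots,x_q$. Under $D_+$ the answer vector is always $(1,\dots,1)$. Under $D_-$, if the random prefix $b^*$ does not coincide with the $\ell$-bit prefix of any queried $x_j$, then $f_-(x_j) = 1$ for every $j$ and the answer vector is again $(1,\dots,1)$. Since $b^*$ is uniform on $\{-1,1\}^\ell$, a union bound shows this event has probability at least $1 - q/2^\ell$, so the total variation distance between the two answer distributions is at most $q/2^\ell$. Any tester succeeding with probability at least $2/3$ on both sides forces $q/2^\ell \geq 1/3$, hence $q \geq 2^\ell/3 = \Omega(1/\epsilon)$.

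I expect the main technical obstacle to be ensuring the construction lines up with the precise hypotheses of Proposition~\ref{prop:fardeg2k}: setting $\mathcal{C}(a) = \emptyset$ off the distinguished prefix is what keeps the uniform bound $|\mathcal{C}(a)| \leq m-1$ valid, and placing $S_1$ inside the coordinates $\{\ell+1,\dots,n\}$ is what allows the Fourier computations in the proof of that proposition to apply verbatim. The remaining ingredients---Parseval (already packaged into the proposition) and a union bound for Yao's minimax---are routine.
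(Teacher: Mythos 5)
Your proof is correct, and the overall strategy is the same as the paper's: Yao's minimax principle applied to non-adaptive deterministic testers, with both the YES and NO distributions built from index selector functions and the high-degree content concealed in a single random subcube indexed by $b^*$. The one genuine difference is your choice of distributions. The paper takes $D_+$ to be the index selector function with each $\cC(a)$ a \emph{uniformly random} size-$k/2$ subset of $\{\l+1,\dots,n\}$, and $D_-$ to agree with this off the distinguished prefix $b^*$ while putting a size-$(n-k+1)$ set at $b^*$; the indistinguishability then rests on the fact that, conditioned on the tester missing the $b^*$ subcube, the queried answers under $D_+$ and $D_-$ have identical distributions. You instead take $D_+$ to be the point mass on the constant-$1$ function and set $\cC(a)=\emptyset$ for $a\neq b^*$ in $D_-$, so that both distributions yield the all-ones answer vector whenever the $b^*$ subcube is not queried. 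Both choices satisfy the hypotheses of Propositions~\ref{prop:degk} and~\ref{prop:fardeg2k} and give the same $\Omega(2^\l)=\Omega(1/\eps)$ bound, but your version makes the indistinguishability step entirely transparent (a trivial coupling to the constant function) rather than requiring a conditional-distribution-matching observation; it is a modest but genuine simplification. The parameter bookkeeping also checks out: $\ell\le k/2-2\le k-1$ gives room for both the size-$(n-k+1)$ set and the farness bound $2^{-\l-1}\geq\eps$.
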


\begin{proof}
Let $\l$ be the largest integer such that $\epsilon< 2^{-\l-1}$. We prove that $\Omega(2^\l)$ queries are required in order to non-adaptively test whether a Boolean function has Fourier degree $\leq k$ or is $\epsilon$-far from any Boolean function with degree $\leq n-k$. Notice that since $\epsilon \geq 1/2^{\frac{k}{2}-1}$ thus $\l\leq \frac{k}{2}-1$.

Let $D_+$ be the distribution obtained by the following random process. Let $\cC^\l(a_1,\ldots,a_\l)$, for every $(a_1,\ldots,a_\l)\in \{-1,+1\}^\l$, be a uniformly chosen subset of size $k/2$ of $\{\l+1,...,n\}$. Let $f^\l$ defined by $\cC^\l$ as described in Section~\ref{sec:fourier}. Proposition~\ref{prop:degk} immediately implies that $f^\l$ has Fourier degree $\leq k$. 

Let $D_-$ be the distribution obtained by the following process. We choose $(b_1,...,b_\l)\in \{-1,1\}^\l$ uniformly at random and choose $\cC^\l(b_1,...,b_\l)$ to be a previously fixed subset of cardinality $n-k+1$ of $\{\l+1,...,n\}$. Also for any $(a_1,...,a_\l)\in \{-1,1\}^l$, where $(a_1,...,a_\l)\neq (b_1,...,b_\l)$, we choose $\cC^\l(a_1,...,a_\l)$ uniformly to be a subset of cardinality $k/2$ of $\{\l+1,...,n\}$. Finally, construct $f^\l$ according to $\cC^\l$. Proposition~\ref{prop:fardeg2k} immediately implies that $f^\l$ is $2^{-\l-1}$-far from any Boolean function with Fourier degree $\leq n-k$. 

Let our final distribution $D$ be that with probability $1/2$ we draw $f^\l$ from $D_+$ and with probability $1/2$ we draw $f^\l$ from $D_-$. Now by Yao's minimax principle if we prove that any deterministic algorithm with less than $2^l/6$ queries makes a mistake with constant probability, this implies that the original testing problem with constant probability of error requires $\frac{2^{\l}}{6}= \Omega(\frac{1}{\epsilon})$ queries. 

For any deterministic set of queries to the function on $d$ inputs $x^1,...,x^d$, with $d \leq \frac{2^{\l}}{6}$,
$$
|\{(a_1,...,a_\l)| (\exists 1\leq i\leq d)  x^i_{[\l]}= (a_1,...,a_\l)\}|\leq d \leq \frac{2^{\l}}{6}.
$$
Therefore the measure of the set of functions from support of $D_-$ for which the deterministic tester has not yet queried any input from the high degree subcube is at least $$
\frac{1}{2} \cdot \frac{2^\l - 2^\l/6}{2^l} = \frac{5}{12} \geq \frac{1}{3}.
$$ 
Since the same is true for $D_+$, the deterministic tester will make an error with probability at least $\frac{1}{3}$.  
\end{proof}

{\small

}


\begin{thebibliography}{10}

\bibitem{AB:10}
Noga Alon and Eric Blais.
\newblock Testing boolean function isomorphism.
\newblock In {\em Proc. 14th International Workshop on Randomization and
  Approximation Techniques in Computer Science}, pages 394--405, 2010.

\bibitem{BarYossefJKS02}
Ziv {Bar-Yossef}, T.~S. Jayram, Ravi Kumar, and D.~Sivakumar.
\newblock An information statistics approach to data stream and communication
  complexity.
\newblock In {\em Proc. 43rd Annual IEEE Symposium on Foundations of Computer
  Science}, pages 209--218, 2002.

\bibitem{BatuRW99}
Tugkan Batu, Ronitt Rubinfeld, and Patrick White.
\newblock Fast approximate {PCP}s for multidimensional bin-packing problems.
\newblock In {\em Proc. 3rd International Workshop on Randomization and
  Approximation Techniques in Computer Science}, pages 245--256, 1999.

\bibitem{BGJ+:09}
Arnab Bhattacharyya, Elena Grigorescu, Kyomin Jung, Sofya Raskhodnikova, and
  David~P. Woodruff.
\newblock Transitive-closure spanners of the hypercube and the hypergrid.
\newblock Technical Report TR09-046, ECCC, 2009.

\bibitem{Bla:08}
Eric Blais.
\newblock Improved bounds for testing juntas.
\newblock In {\em Proc. 12th International Workshop on Randomization and
  Approximation Techniques in Computer Science}, pages 317--330, 2008.

\bibitem{Bla:09}
Eric Blais.
\newblock Testing juntas nearly optimally.
\newblock In {\em Proc. 41st Annual ACM Symposium on the Theory of Computing},
  pages 151--158, 2009.

\bibitem{BBM:12}
Eric Blais, Joshua Brody, and Kevin Matulef.
\newblock Property testing lower bounds via communication complexity.
\newblock {\em Computational Complexity}, 2012.

\bibitem{BO:10}
Eric Blais and Ryan O'Donnell.
\newblock Lower bounds for testing function isomorphism.
\newblock In {\em Proc. 25th Annual IEEE Conference on Computational
  Complexity}, pages 235--246, 2010.

\bibitem{BLR93}
Manuel Blum, Michael Luby, and Ronitt Rubinfeld.
\newblock Self-testing/correcting with applications to numerical problems.
\newblock {\em J. Comput. Syst. Sci.}, 47:549--595, 1993.
\newblock Earlier version in STOC'90.

\bibitem{BCGM10}
Jop Bri\"et, Sourav Chakraborty, David Garc\'ia-Soriano, and Arie Matsliah.
\newblock Monotonicity testing and shortest-path routing on the cube.
\newblock In {\em Proc. 14th International Workshop on Randomization and
  Approximation Techniques in Computer Science}, 2010.

\bibitem{CS:13b}
Deeparnab Chakrabarty and C.~Seshadhri.
\newblock An o(n) monotonicity tester for boolean functions over the hypercube.
\newblock In {\em Proc. 45th Annual ACM Symposium on the Theory of Computing},
  2013.

\bibitem{CS:13}
Deeparnab Chakrabarty and C.~Seshadhri.
\newblock Optimal bounds for monotonicity and lipschitz testing over hypercubes
  and hypergrids.
\newblock In {\em Proc. 45th Annual ACM Symposium on the Theory of Computing},
  2013.

\bibitem{CGM11a}
Sourav Chakraborty, David Garc\'{\i}a-Soriano, and Arie Matsliah.
\newblock Efficient sample extractors for juntas with applications.
\newblock In {\em ICALP (1)}, pages 545--556, 2011.

\bibitem{CGM11}
Sourav Chakraborty, David Garc\'ia-Soriano, and Arie Matsliah.
\newblock Nearly tight bounds for testing function isomorphism.
\newblock In {\em Proc. 22nd Annual ACM-SIAM Symposium on Discrete Algorithms},
  2011.

\bibitem{DLM+:07}
Ilias Diakonikolas, Homin Lee, Kevin Matulef, Krzysztof Onak, Ronitt Rubinfeld,
  Rocco Servedio, and Andrew Wan.
\newblock Testing for concise representations.
\newblock In {\em Proc. 48th Annual IEEE Symposium on Foundations of Computer
  Science}, pages 549--558, 2007.

\bibitem{DGL+:99}
Yevgeniy Dodis, Oded Goldreich, Eric Lehman, Sofya Raskhodnikova, Dana Ron, and
  Alex Samorodnitsky.
\newblock Improved testing algorithms for monotonicity.
\newblock In {\em Proc. 3rd International Workshop on Randomization and
  Approximation Techniques in Computer Science}, pages 97--108, 1999.

\bibitem{ErgunKKRV00}
Funda Ergun, Sampath Kannan, Ravi Kumar, Ronitt Rubenfeld, and Mahesh
  Viswanathan.
\newblock Spot-checkers.
\newblock {\em J. Comput. Syst. Sci.}, 60:717--751, 2000.

\bibitem{FKR+:04}
Eldar Fischer, Guy Kindler, Dana Ron, Shmuel Safra, and Alex Samorodnitsky.
\newblock Testing juntas.
\newblock {\em J. Comput. Syst. Sci.}, 68:753--787, 2004.

\bibitem{FLN+:02}
Eldar Fischer, Eric Lehman, Ilan Newman, Sofya Raskhodnikova, Ronitt Rubinfeld,
  and Alex Samorodnitsky.
\newblock Monotonicity testing over general poset domains.
\newblock In {\em Proc. 34th Annual ACM Symposium on the Theory of Computing},
  pages 474--483, 2002.

\bibitem{Gol:10}
Oded Goldreich.
\newblock On testing computability by small width {OBDD}s.
\newblock In {\em Proc. 14th International Workshop on Randomization and
  Approximation Techniques in Computer Science}, pages 574--587, 2010.

\bibitem{Goldreich11}
Oded Goldreich.
\newblock A brief introduction to property testing.
\newblock In {\em Studies in Complexity and Cryptography}, pages 465--469.
  2011.

\bibitem{Goldreich10}
Oded Goldreich.
\newblock Introduction to testing graph properties.
\newblock In {\em Studies in Complexity and Cryptography}, pages 470--506.
  2011.

\bibitem{GGL+:00}
Oded Goldreich, Shafi Goldwasser, Eric Lehman, Dana Ron, and Alex
  Samorodnitsky.
\newblock Testing monotonicity.
\newblock {\em Combinatorica}, 20(3):301--337, 2000.

\bibitem{HastadW07}
Johan H{\aa}stad and Avi Wigderson.
\newblock The randomized communication complexity of set disjointness.
\newblock {\em Theory of Computing}, pages 211--219, 2007.

\bibitem{KalyanasundaramS92}
Bala Kalyanasundaram and Georg Schnitger.
\newblock The probabilistic communication complexity of set intersection.
\newblock {\em SIAM J. Disc. Math.}, 5(4):547--557, 1992.

\bibitem{Razborov90}
Alexander Razborov.
\newblock On the distributional complexity of disjointness.
\newblock In {\em Proc. 17thInternational Colloquium on Automata, Languages and
  Programming}, pages 249--253, 1990.

\bibitem{Ron:09}
Dana Ron.
\newblock Algorithmic and analysis techniques in property testing.
\newblock {\em Foundations and Trends in Theoretical Computer Science},
  5(2):73--205, 2009.

\bibitem{RT09}
Dana Ron and Gilad Tsur.
\newblock Testing computability by width two obdds.
\newblock In {\em 13th International Workshop on Randomization and Computation
  (RANDOM)}, 2009.

\bibitem{RT10}
Dana Ron and Gilad Tsur.
\newblock Testing computability by width-2 obdds where the variable order is
  unknown.
\newblock In {\em 7th International Conference on Algorithms and Complexity},
  2010.

\bibitem{RS96}
Ronitt Rubinfeld and Madhu Sudan.
\newblock Robust characterizations of polynomials with applications to program
  testing.
\newblock {\em SIAM J. Comput.}, 25:252--271, 1996.

\bibitem{Yao79}
Andrew~C. Yao.
\newblock Some complexity questions related to distributive computing.
\newblock In {\em Proc. 11thAnnual ACM Symposium on the Theory of Computing},
  pages 209--213, 1979.

\end{thebibliography}
\end{document}